\documentclass[letterpaper,10pt,conference]{ieeeconf}
\IEEEoverridecommandlockouts
\overrideIEEEmargins
\usepackage[dvipsnames]{xcolor}
\usepackage{cite}
\usepackage{amsmath,amssymb,amsfonts}
\usepackage{algorithmic}
\usepackage{graphicx}
\usepackage[OT2,T1]{fontenc}
\usepackage{MnSymbol}
\usepackage{lipsum}
%\UseRawInputEncoding
\usepackage{color}
\newtheorem{theorem}{Theorem}
\newtheorem{assumption}{Assumption}
\newtheorem{definition}{Definition}

\newtheorem{proposition}{Proposition}
\newtheorem{problem}{Problem}

\newtheorem{corollary}{Corollary}
\newtheorem{remark}{Remark}
\DeclareSymbolFont{cyrletters}{OT2}{wncyr}{m}{n}
\DeclareMathSymbol{\Sha}{\mathalpha}{cyrletters}{"58}
\usepackage{textcomp}
%\def\BibTeX{{\rm B\kern-.05em{\sc i\kern-.025em b}\kern-.08em
% T\kern-.1667em\lower.7ex\hbox{E}\kern-.125emX}}
%\markboth{\journalname, VOL. XX, NO. XX, XXXX 2021}
%{Author \MakeLowercase{\textit{et al.}}: Preparation of Papers for IEEE TRANSACTIONS and JOURNALS (February 2017)}
\begin{document}

\title{A harmonic framework for the identification of linear time-periodic systems}
\author{Flora Vernerey, Pierre Riedinger, Andrea Iannelli and Jamal Daafouz\\
\thanks{F. Vernerey, P. Riedinger and J. Daafouz are with Universit\'e de Lorraine, CNRS, CRAN, F-54000 Nancy, France. A. Iannelli is with University of Stuttgart, Institute for Systems Theory and Automatic Control.}}
\maketitle
%%%%%%%%%%%%%%%%%%%%%%%%%%%%%%%%%%%%%%%%%%%%%%%%%%%%%%%%%%%%%%%%%%%%%%%%%%%%%%%%
\begin{abstract}
This paper presents a novel approach for the identification of linear time-periodic (LTP) systems in continuous time. This method is based on harmonic modeling and consists in converting any LTP system into an equivalent LTI system with infinite dimension. Leveraging specific harmonic properties, we demonstrate that solving this infinite-dimensional identification problem can be reduced to solving a finite-dimensional linear least-squares problem. The result is an approximation of the original solution with an arbitrarily small error. Our approach offers several significant advantages. The first one is closely tied to the inherent LTI characteristic of the harmonic system, along with the Toeplitz structure exhibited by its elements. The second advantage is related to the regularization property achieved through the integral action when computing the phasors from input and state trajectories. Finally, our method avoids the computation of signals' derivative. This sets our approach apart from existing methods that rely on such computations, which can be a notable drawback, especially in continuous-time settings. We provide numerical simulations that convincingly demonstrate the effectiveness of the proposed method, even in scenarios where signals are corrupted by noise.\end{abstract}
%\begin{IEEEkeywords}
%, Dynamic phasors, Harmonic modeling and control
%\end{IEEEkeywords}

%%%%%%%%%%%%%%%%%%%%%%%%%%%%%%%%%%%%%%%%%%%%%%%%%%%%%%%%%%%%%%%%%%%%%%%%%%%%%%%%
\section{Introduction}

Periodicity arises naturally in various engineering and scientific disciplines \cite{farkas_periodic_1994}. From the mechanical vibrations of a car engine to the oscillations in electronic circuits, and even the rhythmic patterns of biological processes, periodicity is a ubiquitous feature \cite{allen_frequency-domain_2009, salis_stability_2017, bittanti_periodic_2009}. Understanding, analyzing, and controlling these periodic behaviors are essential for optimizing system performance, ensuring stability, and enhancing the predictability of various applications. 

Linear time-periodic (LTP) systems, a subset of linear time-varying systems, are characterized by periodic variations in their parameters over time. Furthermore, under specific conditions, nonlinear systems can be approximated as LTP systems when linearized along a periodic trajectory \cite{salis_stability_2017}. Given the broad spectrum of applications, modeling LTP systems is of significant interest for both analysis and control design. LTP systems, in comparison to LTI systems, introduce a higher degree of complexity. This complexity accounts for the preference in focusing on the identification of LTI systems, a choice supported by the extensive development of tools and methodologies for this purpose \cite{ljung1998system}. 

Most identification techniques for LTP systems involve first identifying one or more linear time-invariant (LTI) systems, which then serve as the basis for deriving a model of the LTP system. Lifting schemes have proven successful to identify the parameters of a discrete time equivalent LTI system either in the time domain \cite{markovsky_realization_2014} or the frequency domain \cite{saetti_identification_2019}. 
The subspace identification method has also been extended to LTP systems both in the time-domain \cite{Verhaegen1995} and more recently in frequency-domain \cite{Uyanik2019,yin_subspace_2021}.
%The subspace identification method has been extended to LTP systems with periodic inputs \cite{yin_subspace_2021, cadoret_linear_2022}.
 In \cite{yin_subspace_2021}, restrictions on the input-output dimension and input form were relaxed by leveraging the idea to employ the frequency response of a time-lifted system with a linear time-invariant structure.
%To overcome the issues related to the
%computation of frequency response of LTP systems and estimate the time-aliased periodic impulse response, the idea is to employ the frequency response of a time-lifted system with a linear time-invariant structure. 
Also, in \cite{uyanik_parametric_2016}, Fourier transformations of state and input data are used to identify the Fourier series coefficients of the state and input matrices. It is important to note that this approach is restricted to stable systems with oscillations that attain a steady-state, which can be a limiting factor, particularly in control applications. 

Moreover, the majority of the proposed methods to date is available for discrete-time systems and, when extended to continuous-time, require unquantified approximations. Our aim is to propose a methodology to identify the Fourier coefficients of the state and input matrices of LTP systems in continuous time for both stable and unstable systems. Importantly, we aim to remove restrictions related to systems achieving a steady-state and to eliminate signal limitations.
To achieve this objective, we make use of an equivalence result established in \cite{blin_necessary_2022}, linking LTP systems to infinite-dimensional LTI systems characterized by a block Toeplitz structure formed by the Fourier coefficients of the state and input matrices. LTP system's parameters are thus inferred from the Fourier series. However, it is essential to note that the harmonic system is inherently infinite-dimensional. Consequently, truncation becomes necessary. By selecting a sufficiently high truncation order, the identification problem is translated into a finite-dimensional linear least-squares problem. We prove that the solution to this finite-dimensional problem converges to the solution of the infinite-dimensional counterpart with an arbitrarily small error.

The paper is organized as follows. The next section is dedicated to mathematical preliminaries on harmonic modelling. In Section III, we state the identification problem. The main results are established in Section IV where we tackle the approximation of the infinite-dimensional identification problem. Illustrative examples are given in Section V, demonstrating the application of our approach to identify linear time-periodic systems, even in scenarios where state measurements are affected by noise.

{\bf Notations: } %The transpose of a matrix $A$ is denoted $A'$ and $A^*$ denotes the complex conjugate transpose $A^*=\bar A'$. 
$C^a$ denotes the space of absolutely continuous function,
$L^{p}([a\ b],\mathbb{C}^n)$ (resp. $\ell^p(\mathbb{C}^n)$) denotes the Lebesgues spaces of $p-$integrable functions on $[a, b]$ with values in $\mathbb{C}^n$ (resp. $p-$summable sequences of $\mathbb{C}^n$) for $1\leq p\leq\infty$. $L_{loc}^{p}$ is the set of locally $p-$integrable functions. The notation $f(t)=g(t)\ a.e.$ means almost everywhere in $t$ or for almost every $t$. 
To simplify the notations, $L^p([a,b])$ or $L^p$ will be often used instead of $L^p([a,b],\mathbb{C}^n)$. 
%For example, $x\in L^2([a,b])$ means $x \in L^2([a,b],\mathbb{C}^n)$. %We denote by $col(X)$ the vectorization of a matrix $X$, formed by stacking the columns of $X$ into a single column vector. Finally, $<\cdot,\cdot>$ refers to the scalar product in $\ell^2$.
%%%%%%%%%%%%%%%%%%%%%%%%%%%%%%%%%%%%%%%%%%%%%%%%%%%%%%%%%%%%%%%%%%%%%%%%%%%%%%%%
\vspace{-.3cm}
\section{Preliminaries on Harmonic modelling}
%Metric units are preferred for use in IEEE publications in light of their
%international readership and the inherent convenience of these units in many fields.
%In particular, the use of the International System of Units (SI Units) is advocated.
% This system includes a subsystem the MKSA units, which are based on the
% meter, kilogram, second, and ampere. British units may be used as secondary units
% (in parenthesis). An exception is when British units are used as identifiers in trade,
% such as, 3.5 inch disk drive.
%We start by recalling some preliminaries related to Toeplitz block matrices, sliding Fourier decomposition in the context of harmonic modeling and the trace operator.

For a given integer $n$, consider $x\in L^{2}_{loc}(\mathbb{R},\mathbb{C}^n)$ a complex valued function of time. Its sliding Fourier decomposition over a window of length $T$ is defined by the time-varying infinite sequence $X=(\cdots,X_{-1},X_0,X_1,\cdots):=\mathcal{F}(x)\in C^a(\mathbb{R},\ell^2(\mathbb{C}^n))$ (see \cite{blin_necessary_2022}) whose $n$-dimensional components $X_k$ (named $k-$th phasor) satisfy:
$$X_{k}(t):=\frac{1}{T}\int_{t-T}^t x(\tau)e^{-\textsf{j}\omega k \tau}d\tau$$ for $k\in \mathbb{Z}$, with $\omega:=\frac{2\pi}{T}$.
 The Toeplitz transformation of a matrix function $A\in L^{2}_{loc}(\mathbb{R},\mathbb{C}^{n\times m})$, denoted $\mathcal{A}:=\mathcal{T}(A)$, defines a block Toeplitz and infinite dimensional matrix as follows: 
\begin{align}
	\mathcal{A}:=\mathcal{T}(A)=
	\left[
	\begin{array}{ccccc}
		\ddots & & \vdots & &\udots \\ & A_{0} & A_{-1} & A_{-2} & \\
		\cdots & A_{1} & A_{0} & A_{-1} & \cdots \\
		& A_{2} & A_{1} & A_{0} & \\
		\udots & & \vdots & & \ddots\end{array}\right],\label{top}\end{align}
where $(A_{k})_{k\in\mathbb{Z}}$ denotes the phasor sequence of $A$.\\
\begin{definition}\label{H} We say that $X$ belongs to $H$ if $X$ is an absolutely continuous function (i.e $X\in C^a(\mathbb{R},\ell^2(\mathbb{C}^n))$ and fulfills for any $k$ the following condition: \begin{equation}\dot X_k(t)=\dot X_0(t)e^{- \textsf{j}\omega k t} \ a.e. \label{ordre_k_ordre_0}\end{equation} 

\end{definition}

Similarly to the Riesz-Fisher theorem which establishes a one-to-one correspondence between the spaces $L^2$ and $\ell^2$, the following theorem establishes a one-to-one correspondence between the spaces $L_{loc}^2$ and $H$ (see \cite{blin_necessary_2022}).
\begin{theorem}\label{coincidence}For a given $X\in L_{loc}^{\infty}(\mathbb{R},\ell^2(\mathbb{C}^n))$, there exists a representative $x\in L^2_{loc}(\mathbb{R},\mathbb{C}^n)$ of $X$, i.e. $X=\mathcal{F}(x)$, if and only if $X \in H$.
\end{theorem}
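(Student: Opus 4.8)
\emph{Proof plan.}

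The forward implication reduces to a direct computation. If $X=\mathcal{F}(x)$ with $x\in L^2_{loc}(\mathbb{R},\mathbb{C}^n)$, then for each $k$ the map $t\mapsto X_k(t)=\frac1T\int_{t-T}^{t}x(\tau)e^{-\textsf{j}\omega k\tau}\,d\tau$ is the indefinite integral of an $L^1_{loc}$ function, hence absolutely continuous, and the fundamental theorem of calculus gives $\dot X_k(t)=\frac1T\big(x(t)e^{-\textsf{j}\omega kt}-x(t-T)e^{-\textsf{j}\omega k(t-T)}\big)$ for a.e.\ $t$. Because $\omega T=2\pi$ we have $e^{\textsf{j}\omega kT}=1$, so $\dot X_k(t)=\frac1T\big(x(t)-x(t-T)\big)e^{-\textsf{j}\omega kt}=\dot X_0(t)e^{-\textsf{j}\omega kt}$ a.e.; together with the fact, recalled in the preliminaries, that $\mathcal{F}(x)\in C^a(\mathbb{R},\ell^2(\mathbb{C}^n))$, this yields $X\in H$.

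For the converse, assume $X\in H$; the plan is to reconstruct a representative window by window and then show the pieces are mutually consistent. Fix $t$. Since $X(t)\in\ell^2(\mathbb{C}^n)$ and $\{T^{-1/2}e^{\textsf{j}\omega k\tau}\}_{k\in\mathbb{Z}}$ is an orthonormal basis of $L^2([t-T,t],\mathbb{C})$, the Riesz--Fischer theorem (applied componentwise) produces a unique $f_t\in L^2([t-T,t],\mathbb{C}^n)$ with $\frac1T\int_{t-T}^{t}f_t(\tau)e^{-\textsf{j}\omega k\tau}\,d\tau=X_k(t)$ for all $k$, and Parseval gives $\frac1T\int_{t-T}^{t}|f_t(\tau)|^2\,d\tau=\|X(t)\|_{\ell^2}^2<\infty$. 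The crux is the consistency claim: whenever $t<t'<t+T$, one has $f_t=f_{t'}$ a.e.\ on the overlap $[t'-T,t]$. I would prove it by pairing $f_{t'}-f_t$ with an arbitrary $\varphi\in C_c^\infty\big((t'-T,t)\big)$: since $\operatorname{supp}\varphi$ lies in both windows, the Fourier coefficients $\varphi_k:=\frac1T\int_{\mathbb{R}}\varphi(\tau)e^{-\textsf{j}\omega k\tau}\,d\tau$ do not depend on which window is used, and Parseval on each window turns $\int_{t'-T}^{t}(f_{t'}-f_t)\overline{\varphi}$ into $T\sum_k\big(X_k(t')-X_k(t)\big)\overline{\varphi_k}$. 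Using \eqref{ordre_k_ordre_0} to write $X_k(t')-X_k(t)=\int_t^{t'}e^{-\textsf{j}\omega ks}\dot X_0(s)\,ds$ (legitimate, since each $X_k\in C^a$ and $\dot X_0\in L^1_{loc}$), and exchanging sum and integral (permissible because the smooth, compactly supported $\varphi$ has summable Fourier coefficients), this equals $T\int_t^{t'}\dot X_0(s)\big(\sum_k\overline{\varphi_k}\,e^{-\textsf{j}\omega ks}\big)\,ds$. The inner series is, up to reindexing, the $T$-periodic Fourier reconstruction built from the window $[t-T,t]$ of the function $\overline{\varphi}$; by smoothness it converges uniformly to the $T$-periodic extension of $\overline{\varphi}|_{[t-T,t]}$, which is identically zero on $[t,t']$ because one period earlier, on $[t-T,t'-T]$, the function $\varphi$ vanishes ($\operatorname{supp}\varphi\subset(t'-T,t)$). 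Hence the pairing is $0$ for every such $\varphi$, so $f_t=f_{t'}$ a.e.\ on the overlap.

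Chaining the consistency claim through intermediate times also covers $|t'-t|\ge T$, so setting $x|_{(t-T,t)}:=f_t$ defines $x:\mathbb{R}\to\mathbb{C}^n$ unambiguously; the Parseval identity above shows $\int_{t-T}^{t}|x|^2<\infty$ for every $t$, i.e.\ $x\in L^2_{loc}(\mathbb{R},\mathbb{C}^n)$, and by construction the phasors of $x$ at each $t$ are exactly $(X_k(t))_k$, so $\mathcal{F}(x)=X$. I expect the consistency claim to be the real obstacle: it is precisely there that \eqref{ordre_k_ordre_0} enters in an essential way, and the care required lies in justifying the sum--integral interchange and in recognizing the trigonometric sum as a genuine (vanishing) periodic Fourier series rather than a merely formal Dirac comb. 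Everything else -- the forward computation, the application of Riesz--Fischer and Parseval on each window, and the gluing -- is routine.
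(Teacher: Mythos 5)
The paper does not actually prove Theorem~\ref{coincidence}: it is imported from \cite{blin_necessary_2022}, so there is no in-paper argument to compare yours against. Judged on its own terms, your proof is essentially correct. The forward direction is the standard computation (local absolute continuity of $t\mapsto\int_{t-T}^{t}$, the fundamental theorem of calculus, and $e^{\textsf{j}\omega kT}=1$), and the converse --- windowed Riesz--Fischer to produce $f_t$ on each $[t-T,t]$, followed by the distributional consistency lemma on overlaps, where condition \eqref{ordre_k_ordre_0} is exactly what makes the periodized test function vanish on $(t,t')$ --- is a legitimate, self-contained route; the sum--integral interchange is indeed justified by the rapid decay of $\varphi_k$ for $\varphi\in C_c^\infty$. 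Note that the reconstruction formula \eqref{recos} quoted in the preliminaries suggests that the cited reference recovers the representative pointwise from the phasors, so your overlap-gluing construction is plausibly a different (and arguably cleaner) path to existence, avoiding the boundary-convergence issues of \eqref{recos} at the window endpoint.

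Two points should be tightened. First, the gluing: setting $x|_{(t-T,t)}:=f_t$ simultaneously for \emph{all} real $t$ does not by itself define a measurable function; define $x$ from a countable subfamily of windows (e.g.\ half-window steps) and then invoke the consistency claim once more to check that the phasors of the resulting $x$ over an \emph{arbitrary} window $[t-T,t]$ equal $X(t)$. Second, in the forward direction the requirement $X\in C^a(\mathbb{R},\ell^2(\mathbb{C}^n))$ is itself part of membership in $H$, and you discharge it by citing the preliminaries, which in turn cite the same external reference. Be aware that $\mathcal{F}(x)$ is in general \emph{not} absolutely continuous into $\ell^2$ in the Bochner sense: for $|t-s|\le T$ one computes $\|X(t)-X(s)\|_{\ell^2}^2=T\int_s^t\|\dot X_0(\sigma)\|^2\,d\sigma$, a square-root modulus of continuity whose total variation over fine partitions is unbounded, and the formal derivative $(\dot X_0(t)e^{-\textsf{j}\omega kt})_{k}$ does not lie in $\ell^2$. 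So $C^a(\mathbb{R},\ell^2)$ must here be read as continuity into $\ell^2$ together with componentwise absolute continuity --- which is what your computation establishes and is also all that your converse actually uses --- but a fully self-contained proof should say so explicitly rather than lean on the preliminaries.
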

Thanks to Theorem~\ref{coincidence}, it is established in \cite{blin_necessary_2022} that any system having solutions in Carath\'eodory sense can be transformed by a sliding Fourier decomposition into an infinite dimensional system for which a one-to-one correspondence between their respective trajectories is established providing that the trajectories in the infinite dimensional space belong to the subspace $H$. Moreover, when a $T-$periodic system is considered, the resulting infinite dimensional system is time-invariant. For instance, consider $T-$periodic functions $A(\cdot)$ and $B(\cdot)$ respectively of class $L^2([0\ T],\mathbb{C}^{n\times n})$ and $L^{\infty}([0\ T],\mathbb{C}^{n\times m})$ and let: 
\begin{align}\dot x(t)=A(t)x(t)+B(t)u(t)\quad x(0)=x_0\label{ltp}\end{align}
If $x$ is a solution associated to the control $u\in L_{loc}^2(\mathbb{R},{\mathbb{C}^m)}$ of the linear time periodic (LTP) system ~\eqref{ltp} then, $X:=\mathcal{F}(x)$ is a solution associated to $U:=\mathcal{F}(u)$ of the linear time invariant (LTI) system:
\begin{align}
	\dot X(t)=(\mathcal{A}-\mathcal{N})X(t)+\mathcal{B}U(t), \quad X(0)=\mathcal{F}(x)(0) \label{ltih}
\end{align}
where $\mathcal{A}=\mathcal{T}(A)$, $\mathcal{B}=\mathcal{T}(B)$ and 
\begin{equation}\mathcal{N}:=diag( \textsf{j}\omega k \otimes Id_n,\ k\in \mathbb{Z})\label{q}\end{equation}
with $\otimes$ the Kronecker product and $Id_n$ the identity matrix of size $n$.
Reciprocally, if $X\in H$ is a solution to \eqref{ltih} with $U\in H$, then their representatives $x$ and $u$
(i.e. $X=\mathcal{F}(x)$ and $U=\mathcal{F}(u)$) are a solution to~\eqref{ltp}. In addition, it is proved in \cite{blin_necessary_2022} that one can reconstruct time trajectories from the exact formula:
\begin{align}\label{recos} x(t)&=\mathcal{F}^{-1}(X)(t)=\sum_{k=-\infty}^{+\infty} X_k(t)e^{ \textsf{j}\omega k t}+\frac{T}{2}\dot X_0(t)\end{align}
%where $X_{k}=(X_{1,k}, \cdots, X_{n,k})$ for any $k\in \mathbb{Z}$.

%Note that if $M\in S^n_{+}$, $tr(\mathcal{M})=0$ implies that \textcolor{red}{$M(t)=0\ a.e$ and thus} $\mathcal{M}=0$. \textcolor{red}{So, it is straightforward to check that .}

\section{Problem formulation}

We consider the continuous-time LTP system
\eqref{ltp} where $A$ and $B$ are real-valued functions both in $L^\infty ([0 \ T])$, $x(t) \in \mathbb{R}^n$ and $u(t) \in \mathbb{R}^m$. Furthermore, we make the following assumptions: 
\begin{assumption}\label{assump1} 
The period $T$ is known and %and they do not change over time. \label{assump2}
the state can be measured or estimated over a time interval $[t_0,t_f]$. \label{assump3}
\end{assumption}

%\begin{assumption}
% The values of the truncation orders $d^oA$ and $d^oB$ are known or an upper bound on them can be estimated. \label{assump4}
%\end{assumption}

These assumptions are common for identification of LTP systems, see for example \cite{uyanik_parametric_2016}. 
%Our results could most certainly be extended to the case where the dimensions of the state and input vary over time but it does not fall within the scope of the current work. As for assumption \ref{assump4}, if the exact amount of non-vanishing harmonics is unknown, applying a high order truncation is a widespread strategy, even if it requires more computation.
%With these assumptions, notice that the values of the functions $A$ and $B$ can be calculated at any time instant $t$ as long as their phasors are known. Therefore, the behaviour of the LTP system \eqref{ltp} is entirely characterised by the phasors that we wish to determine, which shows the interest of solving problem \ref{prob}.
Building upon the preceding section, studying such LTP systems essentially involves examining an equivalent infinite-dimensional Linear Time-Invariant (LTI) system as defined in \eqref{ltih}. Indeed, as \begin{equation}
A(t)= \sum_{k = -\infty}^{\infty} A_{k} e^{\mathbf{j} \omega k t} \ a.e. \quad B(t)= \sum_{k = -\infty}^{\infty} B_{k} e^{\mathbf{j} \omega k t} a.e, \label{AB}
\end{equation} the identification of $A$ and $B$ can be achieved though the identification of the phasor sequences $(A_k)$ and $(B_k)$ that appear explicitly in the operators $\mathcal{A}$ and $\mathcal{B}$ in \eqref{ltih} (see \eqref{top}). Hence, it is possible to reformulate the task of identifying a LTP system as the task of identifying a LTI system, with the caveat that we must address the challenge of its infinite dimensionality. Consequently, the infinite-dimensional problem we aim to solve, achieving precision to an arbitrarily small error, can be stated as follows:

\begin{problem}\label{prob1}Under Assumption 1, 
identify $\mathcal{A}$ and $\mathcal{B}$ in \eqref{ltih} from a state/input $(x,u)$ trajectory of system \eqref{ltp} and deduce $A$ and $B$ with \eqref{AB}. 
\end{problem} 
Problem \ref{prob1} is an infinite dimensional one. Our goal is to provide a computationally tractable approach to identification of \eqref{ltih} with a guaranteed bound on the approximation. 
As we will see in the sequel, adopting this approach offers several noteworthy advantages. The first advantage is inherently tied to the LTI characteristic of the harmonic system, as well as the Toeplitz structure exhibited by its elements. The second advantage pertains to the filtering property achieved through the integral action for computing the phasors $(X,U)$ from state/input trajectories $(x,u)$. The final advantage stems from the fact that we do not require to compute the derivative terms. Indeed, these terms satisfy the following relationship:

$$\dot X_k(t)=e^{-j\omega k t}\dot X_0(t)$$ with \begin{equation}
 \dot X_0(t)=\frac{1}{T}(x(t)-x(t-T))\label{dotX0}
\end{equation} (see \cite{blin_necessary_2022}) and thus it is not necessary to compute any derivative of $x$. These two last properties will be very useful to generate harmonic data and simplify the identification problem.

\vspace{-.05cm}
\section{Main results}
 In this section, we show how problem \ref{prob1} can be reduced to a finite dimensional least squares identification problem whose solution satisfies the original problem up to an arbitrarily small error. As a consequence, the identification of the original LTP system is achieved with a guaranteed bound on the approximation. In addition, a direct bound on the difference of the estimated and true matrices is derived.
\subsection{A central strip identification problem}
Before deriving a finite dimensional approximation to Problem 1, we start by proving some technical results. 
\begin{theorem}\label{tech1} Let $A\in L^\infty([0\ T])$ and the Fourier series of $A$ given by $A(t)=\sum_{k = -\infty}^{\infty} A_{k} e^{\mathbf{j} \omega k t} \ a.e.$ Then, the operator sequence indexed by $p$, $ A|_p(t):=\sum_{k = -p}^{p} A_{k} e^{\mathbf{j} \omega k t}$ converges in $L^\infty$ operator norm to $A$ and we have:
$$\lim_{p\rightarrow +\infty}\|A-A|_{p}\|_{L^\infty}=\lim_{p\rightarrow +\infty}\|\mathcal{A}-\mathcal{A}|_{p}\|_{\ell^2}=0$$
where $\mathcal{A}:=\mathcal{T}(A)$ and $\mathcal{A}_p:=\mathcal{T}(A|_p)$.
\end{theorem}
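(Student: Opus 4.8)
The plan is to decouple the two claimed limits: first prove, for every fixed $p$, the operator estimate $\|\mathcal{A}-\mathcal{A}|_p\|_{\ell^2}\le\|A-A|_p\|_{L^\infty}$, and then, separately, the uniform convergence $\|A-A|_p\|_{L^\infty}\to 0$ of the symmetric partial Fourier sums. Chaining the two yields both limits equal to zero at once, which is the assertion.

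For the operator estimate I would invoke the classical Laurent operator norm identity: for any symbol $f\in L^\infty([0,T],\mathbb{C}^{n\times m})$ one has $\|\mathcal{T}(f)\|_{\ell^2}=\mathrm{ess\,sup}_{t}\|f(t)\|_2=:\|f\|_{L^\infty}$, of which only the bound ``$\le$'' is needed here. It follows from the unitary identification of $\ell^2(\mathbb{Z},\mathbb{C}^m)$ with $L^2([0,T],\mathbb{C}^m)$ sending a sequence $\xi$ to $g(t):=\sum_k \xi_k e^{\mathbf{j}\omega k t}$ (Parseval, up to the $1/T$ normalization built into the phasors): under it, $\mathcal{T}(f)\xi$ is exactly the phasor sequence of the product $fg$, since its $i$-th block is $\sum_k f_{i-k}\xi_k=\frac1T\int_0^T f(t)g(t)e^{-\mathbf{j}\omega i t}\,dt$. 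Hence $\|\mathcal{T}(f)\xi\|_{\ell^2}$ equals $\|fg\|_{L^2}$ up to the same constant, and the pointwise bound $\|f(t)g(t)\|_2\le\|f(t)\|_2\,\|g(t)\|_2$ gives $\|\mathcal{T}(f)\xi\|_{\ell^2}\le\|f\|_{L^\infty}\|\xi\|_{\ell^2}$. Since $\mathcal{T}$ is linear and $\mathcal{A}|_p=\mathcal{T}(A|_p)$, we get $\mathcal{A}-\mathcal{A}|_p=\mathcal{T}(A-A|_p)$, whose only nonzero blocks are those lying strictly outside the central band $|i-j|\le p$ (the central strip is excised); applying the estimate with $f=A-A|_p$ gives $\|\mathcal{A}-\mathcal{A}|_p\|_{\ell^2}\le\|A-A|_p\|_{L^\infty}$.

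It then remains to establish $\|A-A|_p\|_{L^\infty}\to 0$. I would obtain this from absolute summability of the phasor sequence, $\sum_{k\in\mathbb{Z}}\|A_k\|_2<\infty$: then $\|A-A|_p\|_{L^\infty}\le\sum_{|k|>p}\|A_k\|_2$, which is the tail of a convergent series and hence tends to $0$. Absolute summability is the natural hypothesis here, and holds in particular when $A$ is H\"{o}lder continuous, continuous of bounded variation, or — as in the numerical section — a trigonometric polynomial.

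The step I expect to be the real obstacle is precisely this last one: $A\in L^\infty$ together with the a.e.\ pointwise convergence of its Fourier series, as assumed in the statement, does not by itself force the symmetric partial sums $A|_p$ to converge in $L^\infty$ norm (this can fail even for continuous $A$). Some regularity of $A$ — e.g.\ the absolute summability above — has to be used, or else $A|_p$ must be replaced by a summability mean such as the Ces\`{a}ro/Fej\'{e}r average $\sigma_p A=\sum_{|k|\le p}\bigl(1-\tfrac{|k|}{p+1}\bigr)A_k e^{\mathbf{j}\omega k t}$, which converges uniformly whenever $A$ is continuous and is still a trigonometric polynomial of degree $p$, leaving the downstream truncation arguments intact. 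The operator half of the argument, by contrast, is robust and needs nothing beyond $A\in L^\infty$: truncating the Toeplitz symbol and excising the central band of the Toeplitz operator are the same operation, and the operator norm is always dominated by the symbol's $L^\infty$ norm.
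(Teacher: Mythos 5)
Your reduction to the two sub-claims mirrors the paper's own proof: the paper likewise invokes the identity $\|\mathcal{T}(f)\|_{\ell^2}=\|f\|_{L^\infty}$ (citing Gohberg et al.) to convert the operator-norm statement into $\|A-A|_p\|_{L^\infty}\to 0$, and your Parseval/multiplication-operator derivation is just a self-contained proof of the ``$\le$'' half of that identity, which is all that is needed. Where you diverge is on the second step, and your diagnosis is correct: the paper's argument at that point only establishes that $\sum_{|k|>p}A_k e^{\mathbf{j}\omega k t}\to 0$ almost everywhere (pointwise) and then ``takes the limit'' inside the $L^\infty$ norm, which is exactly the non sequitur you warn against --- a.e.\ pointwise convergence of the symmetric partial sums does not yield convergence in the essential-supremum norm, and $\|A-A|_p\|_{L^\infty}\to 0$ can in fact fail even for continuous $A$ (du Bois-Reymond). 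So the theorem as stated needs an extra hypothesis such as the absolute summability $\sum_{k}\|A_k\|<\infty$ you propose (automatic for the trigonometric-polynomial and smooth symbols used in the examples, and satisfied for H\"older or bounded-variation symbols), or else the truncation $A|_p$ must be replaced by a Fej\'er/Ces\`aro mean, which does converge in $L^\infty$ for continuous symbols and leaves the downstream degree-$p$ truncation arguments intact. Your proposal is therefore not merely an alternative route: it identifies, and offers a standard repair for, a genuine gap in the paper's own proof.
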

\begin{proof}First, let us recall \cite{gohberg_classes_2013} that $A \in L^\infty ([0 \ T])$ if and only if $\mathcal{A}$ is a { bounded operator on $\ell^2$ i. e. there exists $C$ s.t. $$
\|\mathcal{A}\|_{\ell^2}=\sup_{\|x\|_{\ell^2}=1}\|\mathcal{A}x\|_{\ell^2}\leq C$$
 and $\|A\|_{L^\infty}=\|\mathcal{A}\|_{\ell^2}$.}
Hence, using the Fourier series of $A$,
%: $$A(t)=\sum_{k\in\mathbb{Z}} A_ke^{ \textsf{j}\omega kt} \ a.e.,$$ 
we can write: 
\begin{align}
\|\mathcal{A}-\mathcal{A}|_{p}\|_{\ell^2}& =\|A-A|_p\|_{L^\infty}=\|\sum_{|k|>p} A_ke^{ \textsf{j}\omega kt} \|_{L^\infty}\label{e1}
\end{align} where $\mathcal{A}:=\mathcal{T}(A)$ and $\mathcal{A}_p:=\mathcal{T}(A|_p)$. 
As by assumption there exists a constant $C_1$ such that
\begin{align*}
\|\mathcal{A}\|_{\ell^2}=\|A\|_{L^\infty}
&= \|\sum_{k\in \mathbb{Z}} A_ke^{ \textsf{j}\omega kt} \|_{L^\infty}<C_1
\end{align*}
the series $\sum_{k\in \mathbb{Z}} A_ke^{ \textsf{j}\omega kt}$ converges almost everywhere and $\lim_{p\rightarrow +\infty}\sum_{|k|>p} A_ke^{ \textsf{j}\omega kt}=0\ a.e.$ 
Taking the limit w.r.t. $p$ in \eqref{e1} leads to the result.
\end{proof}
From this result, it follows that replacing the pair $(\mathcal{A},\mathcal{B})$ by $(\mathcal{A}|_p,\mathcal{B}|_p)$ leads to the following approximation of the harmonic state dynamic: 
\begin{corollary}\label{conver2}
 Let $(x,u)$ be a trajectory of \eqref{ltp} and $(X,U):=\mathcal{F}(x,u)$. 
 For any $\epsilon>0$, there exists $p$ such that for any compact time interval $I$ \begin{equation}
  \sup_{t\in I}\frac{\|\Psi(X(t),U(t))\|_{\ell^2}}{\|(X(t),U(t))\|_{\ell^2}}\leq 2\epsilon \label{eps0}
 \end{equation}
 where $\Psi(X(t),U(t)):=(\mathcal{A}-\mathcal{N})X(t)+\mathcal{B}U(t)-((\mathcal{A}|_p-\mathcal{N})X(t)+\mathcal{B}|_pU(t))$
 %$M=\sup_{t\in I}(\|X(t)\|_{\ell^2}+\|U(t)\|_{\ell^2})$
\end{corollary}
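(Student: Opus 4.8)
The plan is to bound $\|\Psi(X(t),U(t))\|_{\ell^2}$ by splitting it into the state part and the input part and then invoking Theorem~\ref{tech1}. First I would observe that the terms $\mathcal{N}X(t)$ cancel, so that
$$\Psi(X(t),U(t))=(\mathcal{A}-\mathcal{A}|_p)X(t)+(\mathcal{B}-\mathcal{B}|_p)U(t).$$
By the triangle inequality on $\ell^2$ this is at most $\|(\mathcal{A}-\mathcal{A}|_p)X(t)\|_{\ell^2}+\|(\mathcal{B}-\mathcal{B}|_p)U(t)\|_{\ell^2}$, and each factor is controlled by the operator norm: $\|(\mathcal{A}-\mathcal{A}|_p)X(t)\|_{\ell^2}\le\|\mathcal{A}-\mathcal{A}|_p\|_{\ell^2}\,\|X(t)\|_{\ell^2}$, and similarly for $\mathcal{B}$.

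Next I would apply Theorem~\ref{tech1} to both $A$ and $B$: given $\epsilon>0$, there is $p_A$ with $\|\mathcal{A}-\mathcal{A}|_p\|_{\ell^2}\le\epsilon$ for all $p\ge p_A$, and likewise $p_B$ with $\|\mathcal{B}-\mathcal{B}|_p\|_{\ell^2}\le\epsilon$ for all $p\ge p_B$; take $p=\max(p_A,p_B)$. (Note $B\in L^\infty$ is exactly the hypothesis needed so that $\mathcal{B}$ is a bounded operator and the theorem applies.) Combining the two previous displays yields
$$\|\Psi(X(t),U(t))\|_{\ell^2}\le\epsilon\bigl(\|X(t)\|_{\ell^2}+\|U(t)\|_{\ell^2}\bigr).$$
Finally, since $\|X(t)\|_{\ell^2}\le\|(X(t),U(t))\|_{\ell^2}$ and $\|U(t)\|_{\ell^2}\le\|(X(t),U(t))\|_{\ell^2}$, the right-hand side is at most $2\epsilon\,\|(X(t),U(t))\|_{\ell^2}$. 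Dividing through (the ratio is only considered where the denominator is nonzero; if $(X(t),U(t))=0$ then $\Psi=0$ and there is nothing to prove) and taking the supremum over $t\in I$ gives \eqref{eps0}. The bound is uniform in $t$ because the operator-norm estimates from Theorem~\ref{tech1} do not depend on $t$, so in fact the compact interval $I$ plays no role at all in the argument.

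The argument is essentially routine; the only point that deserves care is making sure the cancellation of $\mathcal{N}$ is legitimate and that the operator-norm bound on an infinite block-Toeplitz operator acting on a phasor sequence is exactly what Theorem~\ref{tech1} delivers (it does, via the identity $\|A\|_{L^\infty}=\|\mathcal{A}\|_{\ell^2}$ recalled in its proof). I do not anticipate a genuine obstacle; if anything, the subtlety is purely bookkeeping around the fact that $X(t),U(t)$ are genuinely in $\ell^2$ for each fixed $t$ (which holds since $X\in C^a(\mathbb{R},\ell^2)$ and similarly for $U$, so the products $\mathcal{A}X(t)$ etc. are well defined pointwise in $t$).
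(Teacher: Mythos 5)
Your proposal is correct and follows essentially the same route as the paper: cancel the $\mathcal{N}$ terms, use sub-multiplicativity of the operator norm together with Theorem~\ref{tech1} to make $\|\mathcal{A}-\mathcal{A}|_p\|_{\ell^2}$ and $\|\mathcal{B}-\mathcal{B}|_p\|_{\ell^2}$ each at most $\epsilon$, and absorb $\|X(t)\|_{\ell^2}+\|U(t)\|_{\ell^2}$ into $2\|(X(t),U(t))\|_{\ell^2}$. Your added remarks (handling a vanishing denominator, noting the bound is uniform in $t$ so compactness of $I$ is not really needed) are small refinements over the paper's version, which instead invokes absolute continuity of $X,U$ to justify the supremum.
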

\begin{proof}
 As the operator norms are sub-multiplicative, we have for any $t$:
  \begin{align*}
   \|(\mathcal{A}-\mathcal{N}) X(t)+\mathcal{B} U(t)-((\mathcal{A}|_p-\mathcal{N}) X(t)+\mathcal{B}|_p U(t))\|_{\ell^2}\\\leq
   \|\mathcal{A}-\mathcal{A}|_p\|_{\ell^2}\|X(t)\|_{\ell^2}
   +\|\mathcal{B}-\mathcal{B}|_p\|_{\ell^2} \| U(t)\|_{\ell^2}\\
   \leq
   (\|\mathcal{A}-\mathcal{A}|_p\|_{\ell^2}\|+\|\mathcal{B}-\mathcal{B}|_p\|_{\ell^2})) \| (X(t),U(t))\|_{\ell^2}
  \end{align*}
  then, as $X$ and $U$ are absolutely continuous functions of the time, the supremum on any compact set exists and using Theorem~\ref{tech1} the result follows. 
\end{proof}

\begin{definition}
 The degree $d^oA\in \mathbb{Z}^+\cup\{+\infty\}$ of $A$ is defined by the greatest non vanishing phasor of $A$.
\end{definition}
Corollary~\ref{conver2} shows that if $d^oA$ and $d^oB$ are not finite, it is always possible to obtain an accurate approximated solution to the identification problem by imposing sufficiently large $d^oA$ and $d^oB$.
Indeed for a given $\epsilon>0$ and $p(\epsilon)$ such that Corollary~\ref{conver2} holds, the solution $(\tilde{\mathcal{A}},\tilde{\mathcal{B}})$ of the normalized linear least- squares optimization problem:
$$\min_{\mathcal{A}|_p,\mathcal{B}|_p}\sup_{t\in I}\frac{\|\dot X(t)-((\mathcal{A}|_p-\mathcal{N})X(t)+\mathcal{B}|_pU(t))\|_{\ell^2}^2}{\|(X(t),U(t))\|_{\ell^2}^2}$$ where the unknowns $\mathcal{A}|_p$ $\mathcal{B}|_p$ which are block $p$-banded (i.e. $A_k:=0$ and $B_k:=0$ for $k>p$) and Toeplitz matrices, will necessarily satisfy relation~\eqref{eps0}.\\
Now, let us show that Problem \ref{prob1} can be simplified and reduced to a finite dimensional problem. This problem is referred to as the "central strip identification problem," relating to the "0-row" of (\ref{ltih}).
\begin{theorem}\label{centrale}Problem \ref{prob1} can be reduced to the central strip identification problem which involves the identification of:
\begin{equation}
  \dot X_0(t)=\sum_{k\in\mathbb{Z}}A_kX_{-k}(t)+\sum_{k\in\mathbb{Z}}B_kU_{-k}(t)\label{strip0}
\end{equation}
Moreover, if $d^oA$ and $d^oB$ are finite then the central strip identification problem reduces to the identification of:
$$\dot X_0(t)=\sum_{k=-d^oA}^{d^oA}A_kX_{-k}(t)+\sum_{k=-d^oB}^{d^oB}B_kU_{-k}(t)$$
which is a finite dimensional problem that involves $n^2(2d^oA + 1) + nm(2d^oB + 1)$ real unknowns.
\end{theorem}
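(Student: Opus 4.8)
The plan is to write out the ``0-row'' of the infinite-dimensional LTI system \eqref{ltih} explicitly and to check that it involves exactly the same unknowns as Problem~\ref{prob1}, so that identifying that single row suffices. First I would extract the $k=0$ component of \eqref{ltih}. By \eqref{q} the $(0,0)$ block of $\mathcal{N}$ is $\textsf{j}\omega\cdot 0\otimes Id_n=0$, so $(\mathcal{N}X(t))_0=0$; by the Toeplitz pattern \eqref{top} the $0$-row of $\mathcal{A}$ is $(\cdots,A_{2},A_{1},A_{0},A_{-1},A_{-2},\cdots)$, hence $(\mathcal{A}X(t))_0=\sum_{j\in\mathbb{Z}}A_{-j}X_j(t)=\sum_{k\in\mathbb{Z}}A_kX_{-k}(t)$, and likewise $(\mathcal{B}U(t))_0=\sum_{k\in\mathbb{Z}}B_kU_{-k}(t)$; both series converge in $\mathbb{C}^n$ since $\mathcal{A},\mathcal{B}$ are bounded operators on $\ell^2$ (as recalled in the proof of Theorem~\ref{tech1}) and $X(t),U(t)\in\ell^2$. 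Thus the $k=0$ component of \eqref{ltih} is precisely \eqref{strip0}; combined with $\dot X_0(t)=\frac{1}{T}(x(t)-x(t-T))$ from \eqref{dotX0}, it is a relation between quantities directly computable from the measured trajectory.

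Second, I would argue that \eqref{strip0} carries all the information of Problem~\ref{prob1}. Its block coefficients are exactly the full phasor sequences $(A_k)_{k\in\mathbb{Z}}$ and $(B_k)_{k\in\mathbb{Z}}$; by \eqref{top}, $\mathcal{A}=\mathcal{T}(A)$ and $\mathcal{B}=\mathcal{T}(B)$ are completely determined by these sequences, and $A$, $B$ are then recovered through \eqref{AB}. Conversely, the remaining rows $k\neq 0$ of \eqref{ltih} read $\dot X_k(t)=-\textsf{j}\omega k\,X_k(t)+\sum_{j\in\mathbb{Z}}A_{k-j}X_j(t)+\sum_{j\in\mathbb{Z}}B_{k-j}U_j(t)$ and involve only the same phasors $(A_k)$, $(B_k)$ together with the known operator $\mathcal{N}$; they introduce no new unknowns. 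Hence identifying the central strip \eqref{strip0} is, as an identification task, equivalent to identifying $\mathcal{A}$ and $\mathcal{B}$, which is the claimed reduction.

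Third, for the finite-degree case I would simply substitute the definition of the degree: $A_k=0$ for $|k|>d^oA$ and $B_k=0$ for $|k|>d^oB$, so the bi-infinite sums in \eqref{strip0} truncate to $\sum_{k=-d^oA}^{d^oA}A_kX_{-k}(t)+\sum_{k=-d^oB}^{d^oB}B_kU_{-k}(t)$, a finite-dimensional linear relation in the unknowns $\{A_k\}_{|k|\le d^oA}$, $\{B_k\}_{|k|\le d^oB}$. To count the real unknowns I would use that $A$ and $B$ are real-valued, so $A_{-k}=\overline{A_k}$ and $B_{-k}=\overline{B_k}$: $A_0\in\mathbb{R}^{n\times n}$ and $B_0\in\mathbb{R}^{n\times m}$ account for $n^2$ and $nm$ real parameters, while for $k=1,\dots,d^oA$ (resp. $k=1,\dots,d^oB$) each $A_k$ (resp. $B_k$) is a free complex $n\times n$ (resp. $n\times m$) matrix contributing $2n^2$ (resp. $2nm$) reals and simultaneously fixing $A_{-k}$ (resp. $B_{-k}$). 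Summing yields $n^2(2d^oA+1)+nm(2d^oB+1)$.

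The main obstacle is conceptual rather than computational: stating precisely in what sense Problem~\ref{prob1} ``reduces'' to \eqref{strip0}. The delicate half is the claim that the rows $k\neq0$ of \eqref{ltih} may be discarded; the safe way to justify it is to observe that those rows contain no parameters beyond $(A_k)$, $(B_k)$ and the known $\mathcal{N}$, so they add nothing to be identified — the question of whether these parameters are actually recoverable from one trajectory, as opposed to merely appearing in the equations, being a separate issue handled by the (truncated) least-squares formulation stated just above. Everything else — the Toeplitz bookkeeping for the $0$-row, the truncation under finite degree, and the parameter count — is routine.
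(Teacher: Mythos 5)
Your proposal is correct in substance and follows the same skeleton as the paper (extract the $0$-row of \eqref{ltih} using the Toeplitz pattern and $(\mathcal{N}X)_0=0$, observe that all phasors $(A_k),(B_k)$ already appear there, then truncate under finite degree and count real parameters via $A_{-k}=\overline{A_k}$ with $A_0$ real). The one step where you genuinely diverge from the paper is the justification for discarding the rows $k\neq 0$. You argue that those rows introduce no unknowns beyond $(A_k)$, $(B_k)$ and the known $\mathcal{N}$, and you explicitly defer the question of recoverability to the later least-squares/rank analysis. The paper instead invokes the structural property \eqref{ordre_k_ordre_0} of the space $H$, namely $\dot X_p(t)=e^{-\textsf{j}\omega p t}\dot X_0(t)$, to assert that the $p$-strip equation is the central strip equation multiplied by the known scalar $e^{-\textsf{j}\omega p t}$ and is therefore \emph{no more informative} than \eqref{strip0}. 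These are not the same claim: yours shows that solving the central strip problem suffices to produce $\mathcal{A}$ and $\mathcal{B}$ \emph{if} the central strip problem is solvable, while the paper's argument is aimed at showing the reduction is lossless, i.e.\ that throwing away the other strips cannot degrade identifiability (a priori, extra independent equations in the same unknowns could help pin them down, which your ``no new unknowns'' observation does not exclude). Since the theorem's content is precisely that Problem~\ref{prob1} ``can be reduced'' to the central strip, the $H$-space identity is the intended key ingredient, and your write-up would be strengthened by invoking it; on the other hand, your version is more careful about separating the reduction from identifiability, which the paper's brief ``no more informative'' phrasing glosses over. The Toeplitz bookkeeping, the truncation, and the real-parameter count match the paper's.
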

\begin{proof}
 As $\dot X_p(t)=e^{-j\omega p t}\dot X_0(t)$ for any $p$ and as $$\dot X_0(t)=\sum_{k\in\mathbb{Z}}A_kX_{-k}(t)+\sum_{k\in\mathbb{Z}}B_kU_{-k}(t),$$ it follows
that the $p-$strip of \eqref{ltih} corresponding to $\dot X_p(t)$ satisfies
$$\dot X_p(t)=e^{-j\omega p t}(\sum_{k\in\mathbb{Z}}A_kX_{-k}(t)+\sum_{k\in\mathbb{Z}}B_kU_{-k}(t))$$
and is no more informative than $\dot X_0$. Thus, only the central strip is sufficient to identify the harmonic system \eqref{ltih}. 
 Moreover, if $d^oA$ and $d^oB$ are finite, there are $n^2 (2d^oA+1) + nm (2d^oB+1)$ complex unknowns to be determined in this problem. As $A$ and $B$ are real-valued functions, their phasors of negative and positive order are complex conjugates of each other: $\forall k \in \mathbb{Z}, A_{-k} = \overline{A_k}$. Furthermore, as any phasor of order $0$ is real-valued, this means that Problem \ref{prob1} amounts to identifying $n^2 (2d^oA+1) + nm (2d^oB+1)$ real unknowns.
\end{proof}

Finally, combining Corollary~\ref{conver2} and Theorem~\ref{centrale} leads to the following result:
\begin{theorem}\label{conv3}
For any $u\in L_{loc}^2$ and any $\epsilon>0$, there exists $p$ such that the finite dimensional normalized central strip identification problem on interval $I$ given by: 
\begin{equation}
 \min_{A_k, B_k, |k|\leq p}\sup_{t\in I}\frac{\|\dot X_0(t)-\sum_{k=-p}^{p}(A_kX_{-k}(t)+B_kU_{-k}(t))\|^2}{\|(X(t),U(t))\|_{\ell^2}^2}\label{opt}
\end{equation} where $\|\cdot\|$ refers to the 2-norm, leads to an approximated solution $\tilde A(t):=\sum_{k = -p}^{p} \tilde A_{k} e^{\mathbf{j} \omega k t}$ and $\tilde B(t):=\sum_{k = -p}^{p} \tilde B_{k} e^{\mathbf{j} \omega k t}$ that satisfies:
\begin{equation}
  \sup_{t\in I}\frac{\|\dot X_0(t)-\sum_{k=-p}^{p}(\tilde A_kX_{-k}(t)+\tilde B_kU_{-k}(t))\|}{\|(X(t),U(t))\|_{\ell^2}}\leq 2\epsilon \label{eps}
\end{equation}
and relation \eqref{eps} still holds true if $I$ in \eqref{opt} and \eqref{eps} is replaced by a discrete set $I_d\subset I$. 
\end{theorem}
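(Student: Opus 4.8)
The plan is to assemble Theorem~\ref{conv3} essentially as a corollary of the two preceding results, with the extra twist that the truncation is simultaneously applied to the operators \emph{and} to the range of summation in the central strip. First I would fix $u\in L^2_{loc}$, the corresponding trajectory $(x,u)$ of \eqref{ltp}, and set $(X,U):=\mathcal{F}(x,u)$, which lies in $H$ and is absolutely continuous in time. By Theorem~\ref{centrale}, the exact dynamics along the $0$-row read $\dot X_0(t)=\sum_{k\in\mathbb{Z}}A_kX_{-k}(t)+\sum_{k\in\mathbb{Z}}B_kU_{-k}(t)$. The key observation is that the $0$-row residual of the $p$-truncated operators equals exactly $\dot X_0(t)-\sum_{k=-p}^{p}(A_kX_{-k}(t)+B_kU_{-k}(t))$, i.e.\ it is the $0$-block of the vector $\Psi(X(t),U(t))$ appearing in Corollary~\ref{conver2} (after adding and subtracting $\dot X_0 = $ the $0$-block of $(\mathcal{A}-\mathcal{N})X+\mathcal{B}U$). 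Since the $\ell^2$ norm of the $0$-block is bounded by the $\ell^2$ norm of the whole sequence, Corollary~\ref{conver2} gives, for the given $\epsilon$, a $p=p(\epsilon)$ with
\begin{equation*}
\sup_{t\in I}\frac{\bigl\|\dot X_0(t)-\sum_{k=-p}^{p}(A_kX_{-k}(t)+B_kU_{-k}(t))\bigr\|}{\|(X(t),U(t))\|_{\ell^2}}\leq 2\epsilon .
\end{equation*}

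Next I would argue that the \emph{true} truncated phasors $(A_k,B_k)_{|k|\le p}$ are a feasible point of the optimization \eqref{opt}, so the optimal value of \eqref{opt} is at most $(2\epsilon)^2$. Hence the minimizer $(\tilde A_k,\tilde B_k)_{|k|\le p}$ satisfies
\begin{equation*}
\sup_{t\in I}\frac{\bigl\|\dot X_0(t)-\sum_{k=-p}^{p}(\tilde A_kX_{-k}(t)+\tilde B_kU_{-k}(t))\bigr\|^2}{\|(X(t),U(t))\|_{\ell^2}^2}\leq (2\epsilon)^2,
\end{equation*}
and taking square roots yields \eqref{eps} with $\tilde A(t)=\sum_{|k|\le p}\tilde A_ke^{\mathbf{j}\omega k t}$, $\tilde B(t)=\sum_{|k|\le p}\tilde B_ke^{\mathbf{j}\omega k t}$. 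One should note the minor gap that the $\sup$-over-$t$ of a ratio, rather than a genuine least-squares sum, makes ``minimizer'' slightly informal; I would phrase the conclusion as: any $(\tilde A_k,\tilde B_k)$ achieving a value no larger than the feasible value of the true phasors automatically satisfies \eqref{eps}, which is exactly the assertion needed and sidesteps existence-of-minimizer subtleties.

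Finally, for the statement about replacing $I$ by a discrete subset $I_d\subset I$: since $\sup_{t\in I_d}(\cdot)\le \sup_{t\in I}(\cdot)$ for the true phasors, the feasible value on $I_d$ is still $\le (2\epsilon)^2$, so any minimizer of the $I_d$-version of \eqref{opt} again satisfies \eqref{eps} with $I$ replaced by $I_d$; the monotonicity of the supremum under restriction of the index set is all that is needed, and it holds verbatim. I would also remark that the normalization by $\|(X(t),U(t))\|_{\ell^2}$ is harmless provided this quantity is bounded away from zero on $I$ (otherwise the ratio is interpreted in the limiting sense), which is the reason the bound is stated in normalized form throughout.

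The main obstacle, and the point I would be most careful about, is the first step: making precise that the $0$-block of the truncated-operator residual coincides with the finite central-strip residual in \eqref{opt}, and that bounding only this block (rather than the full sequence) still follows from Corollary~\ref{conver2}. This is where the Toeplitz/banded structure and the identity $\dot X_p(t)=e^{-\mathsf{j}\omega p t}\dot X_0(t)$ from Theorem~\ref{centrale} must be invoked cleanly, so that no information is lost by passing from the infinite-dimensional approximation of Corollary~\ref{conver2} to the scalar-in-the-strip, finitely-many-unknowns problem \eqref{opt}. Everything after that is bookkeeping with feasibility and monotonicity of the supremum.
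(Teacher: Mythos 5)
Your proposal is correct and follows essentially the same route as the paper: invoke Corollary~\ref{conver2} to bound the truncation residual, project onto the $0$-block (whose selection operator has norm one, so the central-strip residual for the \emph{true} phasors is at most $2\epsilon$), conclude that the minimizer of \eqref{opt} can do no worse than this feasible point, and use monotonicity of the supremum for the discrete subset $I_d$. Your explicit feasibility argument and the remark on the normalization make precise what the paper states more tersely, but the underlying proof is the same.
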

\begin{proof}
 Let us define the central strip selecting operator $\mathcal{C}_0:=[\cdots\ 0 \ Id_n \ 0 \ \cdots]$ such that $X_0=\mathcal{C}X$ where $X:=\mathcal{F}(x)$.
 Obviously $\|\mathcal{C}_0\|=1$ where $\|\cdot\|$ is the standard Euclidian norm (2-norm) since 
 $$\sup_{\|X\|_{\ell^2}=1}\|\mathcal{C}_0X\|=\sup_{\|X\|_{\ell^2}=1}\|X_0\|=1$$ is achieved for $X=[\cdots, 0, X_0, 0 ,\cdots]$ with $\|X_0\|=1$.
 Now, for a given $u$, and $\epsilon>0$, we know that there exists $p$ such that Corollary~\ref{conver2} is satisfied and thus relation \eqref{eps0} holds.
 Therefore, \begin{align}
  \frac{\|\mathcal{C}_0\Psi(X(t),U(t))\|}{\|(X(t),U(t))\|_{\ell^2}}\leq 
 \|\mathcal{C}_0\|\sup_{t\in I}\frac{\|\Psi(X(t),U(t))\|_{\ell^2}}{\|(X(t),U(t))\|_{\ell^2}}
 \leq 2\epsilon. \label{eps2}
 \end{align}
 As $\mathcal{C}_0\Psi(X(t),U(t))=\dot X_0(t)-\sum_{k=-p}^{p} (A_kX_{-k}(t)+ B_kU_{-k}(t))$, we see that the minimizer of \eqref{opt} satisfies necessarily \eqref{eps}.
 Finally, if in \eqref{opt} $I$ is replaced by a discrete set $I_d\subset I$, the minimizer of \eqref{opt} satisfies necessarily \eqref{eps} on $I_d$. 
\end{proof}
\begin{remark}\label{normalization}
 Note that computing $\|(X(t),U(t))\|_{\ell^2}$ in \eqref{opt} can be simply achieved by computing $\|(x(t),u(t))\|_{L^2(([t-T \ t])}$ as Riesz-Fisher theorem implies $$\|X(t)\|_{\ell^2}=\|x\|_{L^2([t-T \ t])}=(\frac{1}{T}\int_{t-T}^tx^2(\tau)d\tau)^\frac{1}{2}.$$
\end{remark}

Having successfully converted the infinite-dimensional harmonic identification problem into an approximate finite-dimensional counterpart, the next subsection is dedicated to the conditions essential for achieving a precise solution from a discrete-time sequence of data.

\subsection{Solving the central strip least squares identification problem}
For a sufficiently large number $N$, consider a sampled state/input trajectory $(x,u)$ of the LTP system \eqref{ltp} with sampling time $\delta t=T/N$ over the time interval $I:=[t_0,t_f]$. 
For a given order $p$, the computation of phasors $(X_k(t),U_k(t))$ for $|k|\leq p$ on the time interval $[t_0 + T, t_f]$ can be performed using a Fast Fourier Transform. Also, $\dot X_0$ can be determined using \eqref{dotX0}. Then, these data are normalized following Remark~\ref{normalization} that is, for $|k|\leq p$
\begin{align*}
 ( X_k(t),U_k(t))_\mathbf{N}&:=\frac{( X_k(t),U_k(t))}{M(t)} \text{ and } \dot X_{0_\mathbf{N}}(t):=\frac{\dot X_0(t)}{M(t)}
\end{align*}
where $M(t):=\|(x(t),u(t))\|_{L^2([t-T \ t])}$. The data are stored as follows:
\begin{align*}
\mathbf{X_1} &:= \begin{pmatrix}
 \dot{X}_{{0}_\mathbf{N}}(t_0 + T) & \hdots & \dot{X}_{{0}_\mathbf{N}}(t_f)
\end{pmatrix} \in \mathbb{C}^{n \times L} \\
\mathbf{X_0} &:= \begin{pmatrix}
 X_{{-p:p}_\mathbf{N}}(t_0 +T) & \hdots & X_{{-p:p}_\mathbf{N}}(t_f)
\end{pmatrix} \in \mathbb{C}^{n(2p+1) \times L} \\
\mathbf{U_0} &:= \begin{pmatrix}
 U_{{-p:p}_\mathbf{N}}(t_0 +T) & \hdots & U_{{-p:p}_\mathbf{N}}(t_f)
\end{pmatrix} \in \mathbb{C}^{m(2p+1) \times L}
\end{align*}
where $X_{{-p:p}_\mathbf{N}}(t)$ (respectively $U_{{-p:p}_\mathbf{N}}(t)$) is a column vector which contains the normalized phasors from order $-p$ to $p$ of the $n$ components of $X$ (respectively $U$), $L$ being the number of samples. The $p$-banded central strip identification problem given by \eqref{opt} is formulated as follows:
\begin{problem}\label{lsqr}From data $(\mathbf{X_1},\mathbf{X_0} ,\mathbf{U_0})$, solve the least-squares problem:  
\begin{equation}
  \min_{A_k,B_k,\ |k|\leq p} \|\mathbf{X_1}-[A_p,\cdots,A_{-p}, B_p,\cdots,B_{-p}]\begin{bmatrix} \mathbf{X_0}\\\mathbf{U_0}
\end{bmatrix}\|^2 \label{lsqrpb}
 \end{equation}
where $\|\cdot\|$ refers to $2-$norm.  
\end{problem}
We have the following proposition. 
\begin{proposition} 
 The data $(\mathbf{X_{0}},\mathbf{U_{0}})$ are informative for system identification if and only if
 \begin{equation}
  rank \begin{pmatrix}
   \mathbf{X_{0}} \\ \mathbf{U_{0}}
  \end{pmatrix} = (n+m)(2p+1). \label{rank}
 \end{equation} 
\end{proposition}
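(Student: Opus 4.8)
The plan is to prove the equivalence by connecting the notion of ``informative data'' to the solvability of the least-squares Problem \ref{lsqr}, exactly as in classical identification theory (cf. \cite{ljung1998system}). Recall that a data set is \emph{informative for system identification} precisely when it allows one to distinguish any two candidate models, i.e. if two parameter choices $[A_p,\dots,A_{-p},B_p,\dots,B_{-p}]$ and $[A_p',\dots,B_{-p}']$ produce the same output on the data $(\mathbf{X_0},\mathbf{U_0})$, then they must coincide. The first step is therefore to write this condition algebraically: setting $\Theta:=[A_p,\dots,A_{-p},B_p,\dots,B_{-p}]$ and $\mathbf{Z}:=\begin{bmatrix}\mathbf{X_0}\\\mathbf{U_0}\end{bmatrix}\in\mathbb{C}^{(n+m)(2p+1)\times L}$, informativity is equivalent to the implication $\Theta \mathbf{Z} = \Theta' \mathbf{Z} \Rightarrow \Theta = \Theta'$, i.e. to $(\Theta-\Theta')\mathbf{Z}=0 \Rightarrow \Theta-\Theta'=0$.

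The second step is the elementary linear-algebra observation that $\Delta \mathbf{Z} = 0$ for all rows of $\Delta$ iff every row of $\Delta$ lies in the left null space of $\mathbf{Z}$; hence the implication ``$\Delta \mathbf{Z}=0\Rightarrow\Delta=0$'' holds for arbitrary $\Delta\in\mathbb{C}^{n\times(n+m)(2p+1)}$ if and only if the left null space of $\mathbf{Z}$ is trivial, which is exactly $\mathrm{rank}\,\mathbf{Z} = (n+m)(2p+1)$ (the full-row-rank condition, noting $\mathbf{Z}$ has $(n+m)(2p+1)$ rows, so $L\ge(n+m)(2p+1)$ is implicitly required). This gives \eqref{rank}. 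One should also remark that the real-valuedness / conjugate-symmetry constraints $A_{-k}=\overline{A_k}$, $B_{-k}=\overline{B_k}$ noted after Theorem \ref{centrale} do not weaken the statement: if the full complex parametrization is identifiable then so is the constrained one, and conversely the rank condition is stated on the unconstrained regressor, so the two directions match up cleanly.

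For the ``only if'' direction I would argue by contraposition: if $\mathrm{rank}\,\mathbf{Z} < (n+m)(2p+1)$, pick a nonzero row vector $v$ in the left null space of $\mathbf{Z}$ and form $\Delta$ with first row $v$ and remaining rows zero; then $\Theta$ and $\Theta+\Delta$ are distinct parameters yielding identical predictions $\Theta\mathbf{Z}=(\Theta+\Delta)\mathbf{Z}$ on the data, so the data cannot distinguish them and are not informative. For the ``if'' direction, full row rank of $\mathbf{Z}$ means $\mathbf{Z}$ has a right inverse, so $\Theta$ is uniquely recovered from $\Theta\mathbf{Z}$ as $\Theta=(\Theta\mathbf{Z})\mathbf{Z}^\dagger$, and equivalently the normal equations of \eqref{lsqrpb}, namely $\Theta\,\mathbf{Z}\mathbf{Z}^{*}=\mathbf{X_1}\mathbf{Z}^{*}$, have the unique solution $\Theta=\mathbf{X_1}\mathbf{Z}^{*}(\mathbf{Z}\mathbf{Z}^{*})^{-1}$ since $\mathbf{Z}\mathbf{Z}^{*}$ is then invertible.

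I do not anticipate a serious obstacle here; this is essentially a restatement of the standard persistency-of-excitation / identifiability lemma adapted to the harmonic regressor. The only point requiring a little care is pinning down the precise definition of ``informative'' being used so that the equivalence is literally correct (in particular whether it is phrased model-wise as above or directly as invertibility of $\mathbf{Z}\mathbf{Z}^{*}$); once that definition is fixed, the proof is the two short contrapositive/right-inverse arguments sketched above, and the conjugate-symmetry bookkeeping is a one-line remark rather than a genuine difficulty.
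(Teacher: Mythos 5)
Your argument is correct, but it is worth noting that the paper does not actually prove this proposition: its ``proof'' is a one-line citation of \cite{van_waarde_data_2020}, whereas you supply the full elementary argument that the citation delegates. Your definition of informativity (any two parameter matrices $\Theta,\Theta'$ consistent with the data must coincide) matches the one used in that reference for noise-free identification, and the equivalence with $\mathrm{rank}\,\mathbf{Z}=(n+m)(2p+1)$ via the triviality of the left null space of $\mathbf{Z}$ is exactly the standard lemma; the right-inverse/normal-equations remark also correctly anticipates the uniqueness claim in the subsequent theorem. The one place where your write-up is slightly under-argued is the conjugate-symmetry bookkeeping in the ``only if'' direction: the perturbation $\Delta$ you build from a single left-null row vector $v$ need not respect $A_{-k}=\overline{A_k}$, $B_{-k}=\overline{B_k}$, so it need not correspond to an admissible (real-valued) alternative model. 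This is repairable: since $x$ and $u$ are real, the block rows of $\mathbf{Z}$ satisfy $X_{-k}=\overline{X_k}$, so the left null space is invariant under the conjugate-and-permute map, and from $v$ one can always extract a nonzero \emph{symmetric} null vector (take $v+\bar vP$ or $\textsf{j}(v-\bar vP)$, at least one of which is nonzero), yielding an admissible ambiguity. The paper glosses over this point entirely, so your treatment is, if anything, more careful than the original; just replace the sentence ``the two directions match up cleanly'' with the symmetrization argument above.
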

\begin{proof} The proof follows from \cite{van_waarde_data_2020} dedicated to data informativity for noise free linear systems identification.
\end{proof}
To ensure that the rank condition (\ref{rank}) is met, a necessary condition is that the sample size, denoted as $L$, exceeds the value of $(n+m)(2p+1)$.
\begin{theorem} For a given $p$, if the data $(\mathbf{X_{0}},\mathbf{U_{0}})$ are informative for system identification, the solution of the least-squares problem is given by
$$[\tilde A_p,\cdots,\tilde A_{-p}, \tilde B_p,\cdots,\tilde B_{-p}]:=\mathbf{X_1}\begin{bmatrix} \mathbf{X_0}\\\mathbf{U_0}
\end{bmatrix}^\dag$$ where $\begin{bmatrix} \mathbf{X_0}\\\mathbf{U_0}
\end{bmatrix}^\dag$ refers to the pseudo-inverse of $\begin{bmatrix} \mathbf{X_0}\\\mathbf{U_0}
\end{bmatrix}$ and is uniquely determined. 
Moreover, for a given $\epsilon>0$, if $p$ satisfies Corollary~\ref{conver2} on $I:=[t_0+T,t_f]$
then relation \eqref{eps} is satisfied at least at every sample time $t:=t_k+T$ of $I$ and we have:
 \begin{align}
 \max_{t}&\|([\tilde A_p,\cdots,\tilde A_{-p}, \tilde B_p,\cdots,\tilde B_{-p}]\nonumber\\&-[ A_p,\cdots, A_{-p}, B_p,\cdots,B_{-p}])\begin{bmatrix}
   X_{{-p:p}_\mathbf{N}}(t) \\ U_{{-p:p}_\mathbf{N}}(t)
  \end{bmatrix}
\|\leq 4\epsilon. \label{error}
\end{align} %with $\|(X(t),U(t))_\mathbf{N}\|_{\ell^2}:=\frac{\|(X(t),U(t))\|_{\ell^2}}{M(t)}}$.

%A bound on \begin{align*}
 % \|([\tilde A_p,\cdots,\tilde A_{-p}, \tilde B_p,\cdots,\tilde B_{-p}]&-[ A_p,\cdots, A_{-p}, B_p,\cdots,B_{-p}])\|\\
%&\leq 4\epsilon \sum_{i=1}^{(n+m)(2p+1)} \alpha_i
%\end{align*}
%with $$\sum_{i=1}^{(n+m)(2p+1)} \alpha_i\begin{bmatrix}
%   X_{{-p:p}_\mathbf{N}}(t_i) \\ U_{{-p:p}_\mathbf{N}}(t_i)
%  \end{bmatrix}$$
%and thus there exists a finite number $M$ such that $$\|([\tilde A_p,\cdots,\tilde A_{-p}, \tilde B_p,\cdots,\tilde B_{-p}]-[ A_p,\cdots, A_{-p}, B_p,\cdots,B_{-p}])\|\leq 4 M\epsilon $$
\end{theorem}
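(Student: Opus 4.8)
The plan is to treat the two assertions separately. For the closed form and uniqueness, write $\Phi:=\begin{bmatrix}\mathbf{X_0}\\\mathbf{U_0}\end{bmatrix}$ and $\Theta:=[A_p,\cdots,A_{-p},B_p,\cdots,B_{-p}]$, so that \eqref{lsqrpb} minimizes the convex quadratic map $\Theta\mapsto\|\mathbf{X_1}-\Theta\Phi\|^2$; I would characterize its minimizers by the normal equations $\Theta\,\Phi\Phi^{*}=\mathbf{X_1}\Phi^{*}$ and then invoke the informativity hypothesis \eqref{rank}, which states exactly that $\Phi$ has full row rank $(n+m)(2p+1)$. Then $\Phi\Phi^{*}$ is Hermitian positive definite, the normal equations have the unique solution $\mathbf{X_1}\Phi^{*}(\Phi\Phi^{*})^{-1}=\mathbf{X_1}\Phi^{\dagger}$ — the announced expression — and moreover $\Phi\Phi^{\dagger}=Id$, so that writing $\mathbf{X_1}=\Theta\Phi+R$ for any $\Theta$ one has $\tilde\Theta-\Theta=R\Phi^{\dagger}$; in particular $\tilde\Theta=\Theta$ whenever $R=0$.

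For the approximation bounds the first step is to show that the \emph{true} coefficient row $\Theta$ is an $\mathcal{O}(\epsilon)$ near-solution. Starting from the central-strip identity \eqref{strip0} and isolating the terms with $|k|\le p$, the remainder is $\mathcal{C}_0\Psi(X(t),U(t))$, with $\Psi$ and the central-strip selector $\mathcal{C}_0$ ($\|\mathcal{C}_0\|=1$) as in the proof of Theorem~\ref{conv3}. Since $p$ satisfies Corollary~\ref{conver2} on $I=[t_0+T,t_f]$, this gives $\|\dot X_0(t)-\sum_{|k|\le p}(A_kX_{-k}(t)+B_kU_{-k}(t))\|\le 2\epsilon\,\|(X(t),U(t))\|_{\ell^2}$ for every $t\in I$; dividing by the normalization factor $M(t)=\|(x,u)\|_{L^2([t-T\ t])}=\|(X(t),U(t))\|_{\ell^2}$ (Remark~\ref{normalization}) and restricting to the sampling instants $t=t_k+T$, this says precisely that $\Theta$ already satisfies \eqref{eps} at every sample, i.e. $\|r_t\|\le 2\epsilon$ with $r_t:=\dot X_{0_\mathbf{N}}(t)-\Theta\varphi_t$ and $\varphi_t:=\begin{bmatrix}X_{-p:p_\mathbf{N}}(t)\\U_{-p:p_\mathbf{N}}(t)\end{bmatrix}$.

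To pass to the identified row I would use the elementary identity $(\tilde\Theta-\Theta)\varphi_t=r_t-\tilde r_t$, where $\tilde r_t:=\dot X_{0_\mathbf{N}}(t)-\tilde\Theta\varphi_t$ is the least-squares residual at sample $t$; by the triangle inequality $\|(\tilde\Theta-\Theta)\varphi_t\|\le\|r_t\|+\|\tilde r_t\|\le 2\epsilon+\|\tilde r_t\|$, so \eqref{error} reduces to showing $\|\tilde r_t\|\le 2\epsilon$ at every sample — that is, that $\tilde\Theta$ itself satisfies \eqref{eps} on the sampling grid. This is where I expect the main obstacle to lie: the criterion \eqref{lsqrpb} controls only the \emph{aggregate} $\sum_t\|\tilde r_t\|^2$, and the existence of a feasible $\Theta$ with each column residual $\le2\epsilon$ only yields $\sum_t\|\tilde r_t\|^2\le L(2\epsilon)^2$, hence the weaker pointwise bound $\|\tilde r_t\|\le 2\epsilon\sqrt{L}$.

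I would close this gap following the last sentence of Theorem~\ref{conv3}: rather than the sum-of-squares criterion, one works with the discrete Chebyshev (minimax) version of \eqref{opt} restricted to the sampling grid $I_d=\{t_0+T,\dots,t_f\}$, for which $\Theta$ is feasible with value $\le2\epsilon$; its minimizer therefore has \emph{every} residual $\le2\epsilon$, giving \eqref{eps} at every sample and then \eqref{error} with the stated constant $4\epsilon$, while the pseudo-inverse formula of the first part is the tractable surrogate used in practice. As a consistency check, when $d^oA,d^oB\le p$ there is no truncation error, $R=0$, so $\tilde\Theta=\Theta$ exactly (under informativity) and \eqref{error} holds with $0$; the only genuinely delicate regime is the one with infinitely many nonzero phasors, where the minimax/least-squares distinction above is the crux.
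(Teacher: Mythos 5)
Your argument follows the same route as the paper's own proof: uniqueness of the least-squares solution from the full-row-rank condition \eqref{rank}, the observation that the true coefficient row is a $2\epsilon$ near-solution at every sample (via Corollary~\ref{conver2} and the unit-norm central-strip selector), and a triangle inequality combining the truncation residual of the true coefficients with the fitting residual of the estimate to obtain the $4\epsilon$ bound in \eqref{error}. Your normal-equations derivation of $\tilde\Theta=\mathbf{X_1}\Phi^{\dagger}$ is more detailed than the paper's one-line assertion, and is correct.

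The obstacle you single out is genuine, and it is present in the paper's proof as well, not only in your reconstruction. The paper justifies the step ``relation \eqref{eps} is satisfied at every sample time'' by citing Theorem~\ref{conv3}, but \eqref{eps} there is a property of the minimizer of the sup-norm criterion \eqref{opt} (or its restriction to a discrete set), whereas the estimator in the present theorem is the minimizer of the sum-of-squares criterion \eqref{lsqrpb}; these are different optimization problems and their minimizers need not coincide. As you note, feasibility of the true $\Theta$ with pointwise residual $2\epsilon$ only gives the least-squares minimizer an aggregate bound $\sum_t\|\tilde r_t\|^2\le 4\epsilon^2 L$, hence a pointwise bound $2\epsilon\sqrt{L}$, and one can check directly that $\tilde R=R(I-\Phi^{\dagger}\Phi)$ yields the same $\sqrt{L}$ inflation, so the factor cannot be removed by a smarter norm estimate alone. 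Your repair --- replacing \eqref{lsqrpb} by the discrete Chebyshev problem, whose minimizer does satisfy \eqref{eps} at every sample by feasibility of $\Theta$ --- is sound, but it proves \eqref{error} for a different estimator than the pseudo-inverse one announced in the statement. So the honest conclusion is: your proof is correct for the minimax estimator, the paper's claimed constant $4\epsilon$ for the pseudo-inverse estimator should read $2\epsilon(1+\sqrt{L})$ (or the criterion in Problem~\ref{lsqr} should be stated in minimax form), and you have correctly located the step where the paper's argument is loose.
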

\begin{proof}As condition \eqref{rank} is satisfied, the minimizer of \eqref{lsqrpb} is uniquely defined. If $p$ is such that Corollary~\ref{conver2} is satisfied on $I$ then relation \eqref{eps} is satisfied on every sample time $t:=t_k+T$ (as stated in Theorem~\ref{conv3}).
Using \eqref{eps} and \eqref{eps2}, it follows that for any sample time $t:=t_k+T$:
\begin{align*}
 \|([\tilde A_p,&\cdots,\tilde A_{-p}, \tilde B_p,\cdots,\tilde B_{-p}]\\
 &-[ A_p,\cdots, A_{-p}, B_p,\cdots,B_{-p}])\begin{bmatrix}
   X_{{-p:p}_\mathbf{N}}(t) \\ U_{{-p:p}_\mathbf{N}}(t)
  \end{bmatrix}\|\\
&\leq \|\sum_{k=-p}^{p}(\tilde A_kX_{-k}(t)+\tilde B_kU_{-k}(t))-\dot X_0(t)\|+\\
&\|\dot X_0(t)-\sum_{k=-p}^{p} (A_kX_{-k}(t)+B_kU_{-k}(t))\|\leq 4\epsilon.
\end{align*}
\end{proof}
\begin{corollary}
There is a subsequence of sampling time $t_{i_k}$ of length $(n+m)(2p+1)$ such that the matrix $V$ whose columns are formed by $(X_{{-p:p}_\mathbf{N}}(t_{i_k}),U_{{-p:p}_\mathbf{N}}(t_{i_k}))$ is invertible. Then, the following bound holds:
 \begin{align*}\|[\tilde A_p,\cdots,&\tilde A_{-p}, \tilde B_p,\cdots,\tilde B_{-p}]\\&-[ A_p,\cdots, A_{-p}, B_p,\cdots,B_{-p}]\|\leq 4\epsilon M\end{align*}
where $M=(n+m)(2p+1)\|V^{-1}\|$. 
\end{corollary}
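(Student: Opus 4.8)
The plan is to obtain the required subsequence directly from the informativity hypothesis, and then to propagate the per-sample bound \eqref{error} through a square invertible block of the data matrix.

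\emph{Step 1 — extracting the subsequence.} Since the data $(\mathbf{X_0},\mathbf{U_0})$ are assumed informative, \eqref{rank} tells us that $\begin{pmatrix}\mathbf{X_0}\\\mathbf{U_0}\end{pmatrix}$ has full row rank $(n+m)(2p+1)$. Its columns are indexed by the sample times $t_k+T$ lying in $I:=[t_0+T,t_f]$, so I would invoke the elementary fact that a full-row-rank matrix contains a set of independent columns of cardinality equal to its rank, select such a set, and call the corresponding sample times $t_{i_k}$. The square matrix $V$ built from the normalized phasor columns $(X_{{-p:p}_\mathbf{N}}(t_{i_k}),U_{{-p:p}_\mathbf{N}}(t_{i_k}))$ is then invertible by construction.

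\emph{Step 2 — a column-wise bound on $\Delta V$.} Writing $\Delta:=[\tilde A_p,\cdots,\tilde A_{-p},\tilde B_p,\cdots,\tilde B_{-p}]-[A_p,\cdots,A_{-p},B_p,\cdots,B_{-p}]$ and $N:=(n+m)(2p+1)$, I would apply the preceding theorem: provided $p$ meets Corollary~\ref{conver2} on $I$, relation \eqref{error} gives $\|\Delta\,(X_{{-p:p}_\mathbf{N}}(t),U_{{-p:p}_\mathbf{N}}(t))\|\leq 4\epsilon$ at every sample time $t=t_k+T$, in particular at the $N$ instants $t_{i_k}$. Hence each column of $\Delta V$ has Euclidean norm at most $4\epsilon$, and writing $\Delta V=\sum_{j=1}^{N}(\Delta v_j)e_j^{\top}$ with $v_j$ the $j$-th column of $V$, so that each summand is rank one with operator norm $\|\Delta v_j\|$, the triangle inequality yields $\|\Delta V\|\leq 4\epsilon N$.

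\emph{Step 3 — conclusion.} Since $V$ is invertible, $\Delta=(\Delta V)V^{-1}$, and sub-multiplicativity of the operator norm gives $\|\Delta\|\leq\|\Delta V\|\,\|V^{-1}\|\leq 4\epsilon N\|V^{-1}\|=4\epsilon M$, which is the asserted bound. The argument is short, and the only points requiring care are bookkeeping ones: one must ensure the selected sample times $t_{i_k}$ genuinely lie in the interval $I$ over which \emph{both} the rank condition and Corollary~\ref{conver2} are assumed, so that \eqref{error} is legitimately available at those instants; and one must choose the matrix-norm estimate that produces exactly the constant $M=(n+m)(2p+1)\|V^{-1}\|$ (a sharper column-wise bound would in fact replace $N$ by $\sqrt N$, but the statement only asks for the cruder version).
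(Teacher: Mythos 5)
Your proposal is correct and follows essentially the same route as the paper: extract the invertible block $V$ from the rank condition, apply the per-sample bound \eqref{error} to the columns of $V$, and convert to an operator bound via $V^{-1}$ with the factor $(n+m)(2p+1)$. The paper phrases the last step by writing an arbitrary unit vector as $Y=V\Lambda$ and bounding $\sum_i|\Lambda_i|\leq r\|\Lambda\|$, which is the same computation as your $\|\Delta\|\leq\|\Delta V\|\,\|V^{-1}\|$ with the column-wise triangle inequality, and your remark that the constant could be sharpened to $\sqrt{(n+m)(2p+1)}$ applies equally to the paper's version.
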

\begin{proof}
 As the rank condition \eqref{rank} is achieved, an invertible matrix $V$ can be extracted from the columns of $(\mathbf{X_0},\mathbf{U_0})$.
 As for any $Y\in \mathbb{R}^r$ with $r=(n+m)(2p+1)$ s.t. $\|Y\|=1$, there exists $\Lambda$ s.t. $Y=V\Lambda$, the following relation holds (using \eqref{error}): 
 \begin{align*}
 \|([\tilde A_p,&\cdots,\tilde A_{-p}, \tilde B_p,\cdots,\tilde B_{-p}]\\&-[ A_p,\cdots, A_{-p}, B_p,\cdots,B_{-p}])Y\|\leq 4\epsilon \sum_i |\Lambda_i|
 \end{align*}
Using norm equivalence in finite dimension, we have:
\begin{align*}
 \|([\tilde A_p,&\cdots,\tilde A_{-p}, \tilde B_p,\cdots,\tilde B_{-p}]\\&-[ A_p,\cdots, A_{-p}, B_p,\cdots,B_{-p}])Y\|\leq 4\epsilon r \|\Lambda\|\\
 &\leq 4\epsilon r \|V^{-1}Y\|\leq 4\epsilon r \|V^{-1}\|
 \end{align*}
Taking the supremum on $Y$ leads to the result. 
\end{proof}

\begin{remark} Recall that Corollary~\ref{conver2} primarily establishes the existence of a solution, yet it does not provide a constructive approach. Determining an appropriate value for $p$ often entails a trial-and-error process until a satisfactory solution is achieved. Here, a satisfactory solution denotes one in which the higher-order phasors obtained approach zero.
Furthermore, the rank condition \eqref{rank} acts as a necessary condition for the uniqueness of a solution in Problem~\ref{lsqr} when dealing with noise-free data. However, it is essential to highlight that this rank condition becomes insufficient in the presence of noisy measurements. To mitigate the impact of noise, a larger value for $L$ becomes imperative in addressing Problem~\ref{lsqr}.
\end{remark}

\subsection{Validation}
Before exploring specific examples, we introduce a validation protocol that will be applied in the following section. For theoretical validation, our initial focus is on the noiseless scenario. After successfully solving the least-squares Problem~\ref{lsqr} for a particular value of $p$, the subsequent step is to validate the resulting model. Here, we employ the relative error between the true and estimated phasors as our validation criterion. This relative error is computed as the following percentage:
 
\begin{equation}
 err := 100 \cdot \frac{||P_{th}-P_{est}||_{2}}{||P_{th}||_{2}} \label{err}
\end{equation}
where $P_{th}$ (respectively $P_{est}$) is a matrix which contains the theoretical values of the phasors (respectively the phasors estimated with the least-squares method). %It can be assumed that $||P_{th}||_{2} \neq 0$, or else the system would have no dynamics worth studying.
A threshold $\varepsilon$ must be chosen to indicate if the obtained values of the phasors are close enough to the true ones. This threshold depends on the intended use of the identified model. If $err \leq \varepsilon$, the estimated model is acceptable, otherwise a larger value of $p$ or $L$ must be chosen to satisfy the validation criterion.

It is crucial to recognize that in real-world scenarios, the theoretical values of the phasors are often unknown. Consequently, an alternative validation criterion must be employed. This validation procedure involves the use of a distinct dataset separate from the one used for identification. One can simulate the LTP system on a new trajectory, facilitating a comparison between the true and estimated trajectories.

\vspace{-.2cm}
\section{Illustrative examples}

In this section, we illustrate our approach on three examples, one of which is the wind turbine problem borrowed from \cite{bottasso_model-independent_2015}. To assess the sensitivity of our proposed methodology to noisy data, we first test the protocol with noiseless data, then we introduce a random disturbance to the state measurements $x$. At each time instant $t$ and for $i=1,...,n$, the noise on state $x_i$ conforms to a Gaussian distribution $N(0,\sigma^2)$ where $3 \sigma = \frac{5}{100} |x_i (t)|$, so that a certain signal-to-noise ratio is achieved. For each example, we solve the least-squares Problem~\ref{lsqr} with a set of $100$ random initial conditions $x(t_0)$. If the system is non autonomous, piecewise-periodic signals are chosen as inputs, with phasors acquired from a normal distribution.

\subsection{A finite phasor-order example}

First, let us consider a LTP system generated with random phasors for $A$ and $B$, where $n=3$, $m=2$ and $d^oA = d^oB =10$. Consequently, the truncation order can be set to $p = 10$. This leads to a total of $n(n+m)(2p+1) = 315$ unknowns. 

In the absence of noise, the choice of $L = (n+m)(2p+1)$ and $\delta t = \frac{T}{4p}$ enables precise recovery of the unknown phasor values, provided that the system is sufficiently excited by the input, as indicated by the fulfillment of condition \eqref{rank}. In such instances, the relative error across all $100$ trials remains below $10^{-6} \%$. This outcome underscores the success of the identification protocol when well-selected inputs are employed.

In the presence of noise, errors arise during the computation of $\dot{X}_{0}(t)$ and $X_{k}(t)$. Since $\dot{X}_{0}(t)$ is determined using formula \eqref{dotX0}, and the state noise is zero-mean, the error associated with $\dot{X}_{0}(t)$ also exhibits a zero-mean characteristic. The Fast Fourier transform introduces a noise-smoothing effect during the calculation of $X_{k}(t)$. Moreover, the disturbance is bounded. Consequently, if we denote the data matrices affected by noise as $\tilde{\mathbf{X}}_1$ and $\tilde{\mathbf{X}}_0$, there exist values, $\varepsilon_1$ and $\varepsilon_0$, such that $|| \mathbf{X_1} - \tilde{\mathbf{X}}_1 || \leq \varepsilon_1$ and $|| \mathbf{X_0} - \tilde{\mathbf{X}}_0 || \leq \varepsilon_0$.

A larger value of $L$, namely $3(n+m)(2p+1)$, is chosen to ensure that a precise enough solution can be found. The relative error defined in equation \eqref{err}, calculated for the identified phasors, falls within the range of $3.2 \%$ and $8.5 \%$ across all $100$ trials. 

The validation of the identification results on a new trajectory is depicted in Fig. ~\ref{ex}. With noise free data, the true and estimated trajectories align closely. In the presence of noise, an approximation of the true trajectories remains possible.

\begin{figure}[h!]\begin{center}
		\includegraphics[width=\linewidth]{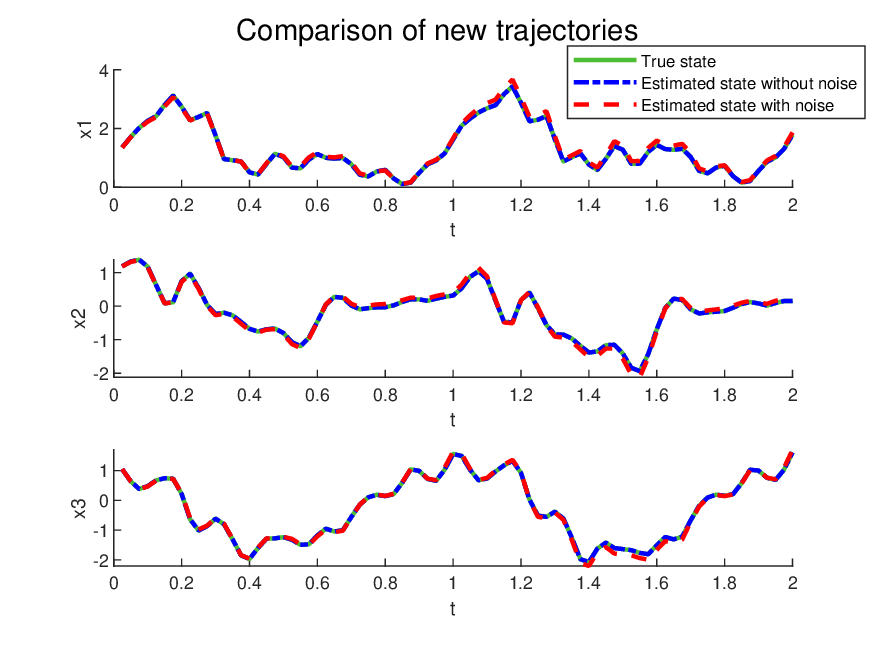}
		\caption{Comparison of the true and estimated trajectories of the finite phasor-order example.} \label{ex}
	\end{center}
\end{figure}

\subsection{An infinite phasor-order example}
Consider the example discussed in \cite{riedinger2022solving}:
\begin{align}
	\dot x=&\left(\begin{array}{cc}a_{11} (t) & a_{12} (t) \\a_{21} (t) & a_{22} (t)\end{array}\right)x+\left(\begin{array}{c}b_{11}(t) \\0\end{array}\right)u\label{ex_inf}\end{align}
{\small\begin{align*}a_{11} (t) &=1+\frac{4}{\pi}\sum_{k=0}^{\infty}\frac{1}{2k+1}\sin(\omega (2k+1)t),\\
	a_{12} (t) &= 2+\frac{16}{\pi^2}\sum_{k=0}^{\infty}\frac{1}{(2k+1)^2}\cos(\omega (2k+1)t),\\
	a_{21} (t) &= -1+\frac{2}{\pi}\sum_{k=1}^{\infty}\frac{(-1)^k}{k}\sin(\omega kt+\frac{\pi}{4}),\\
	a_{22} (t) &= 1-2\sin(\omega t)-2\sin(3\omega t)+2\cos(3\omega t)+2\cos(5\omega t),\\
	b_{11}(t)&=1+ 2 \cos(2\omega t)+ 4 \sin(3\omega t) \text{ with }\omega=2\pi.
\end{align*}}
The corresponding Toeplitz matrix $\mathcal{A}$ possesses an infinite number of phasors and does not exhibit a banded structure. Nevertheless, its higher-order phasors converge towards zero, which implies that it remains feasible to identify the non-negligible phasors within the matrix.

Given the inherent instability of this system, we employ data obtained from multiple trajectories for phasor identification (see \cite{van2020willems}). With a truncation order set at $p=25$, a time step of $\delta t = \frac{T}{256}$ and utilising $16$ trajectories, each with a length of $512$ time points, the identification error ranges from $4.6 \%$ to $9.8 \%$ among the 100 trials. 

The validation of the identification results on a new trajectory is visually demonstrated in Fig. ~\ref{ex2}. It is noteworthy that the algorithm can operate with a reduced number of trajectories if their length are chosen according to \cite{van2020willems}.

\begin{figure}[h!]\begin{center}
		\includegraphics[width=\linewidth,height=6cm]{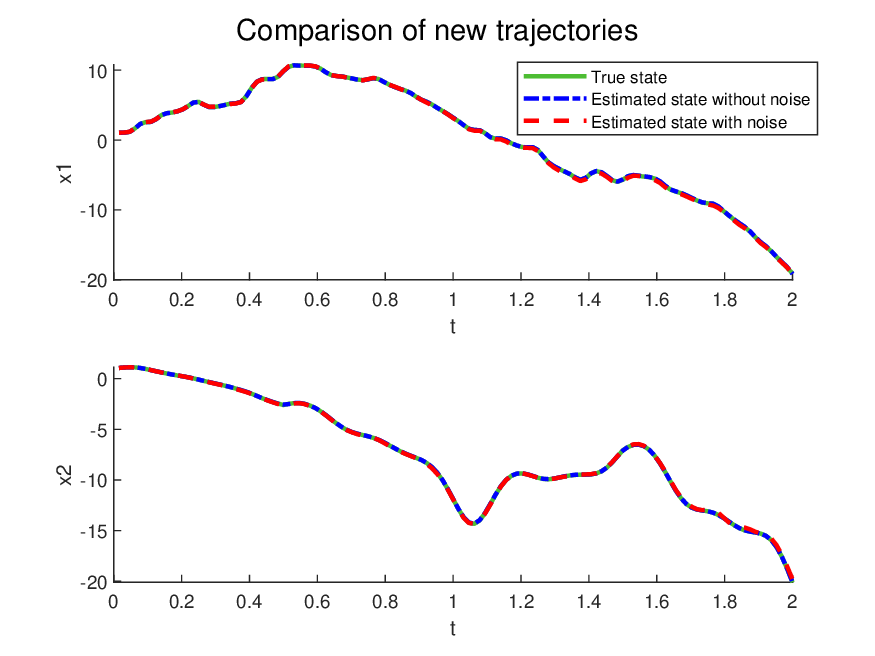}
		\caption{Comparison of the true and estimated trajectories of the infinite phasor-order example.} \label{ex2}
	\end{center}
\end{figure}

\begin{figure}[h!]\begin{center}
		\includegraphics[width=\linewidth]{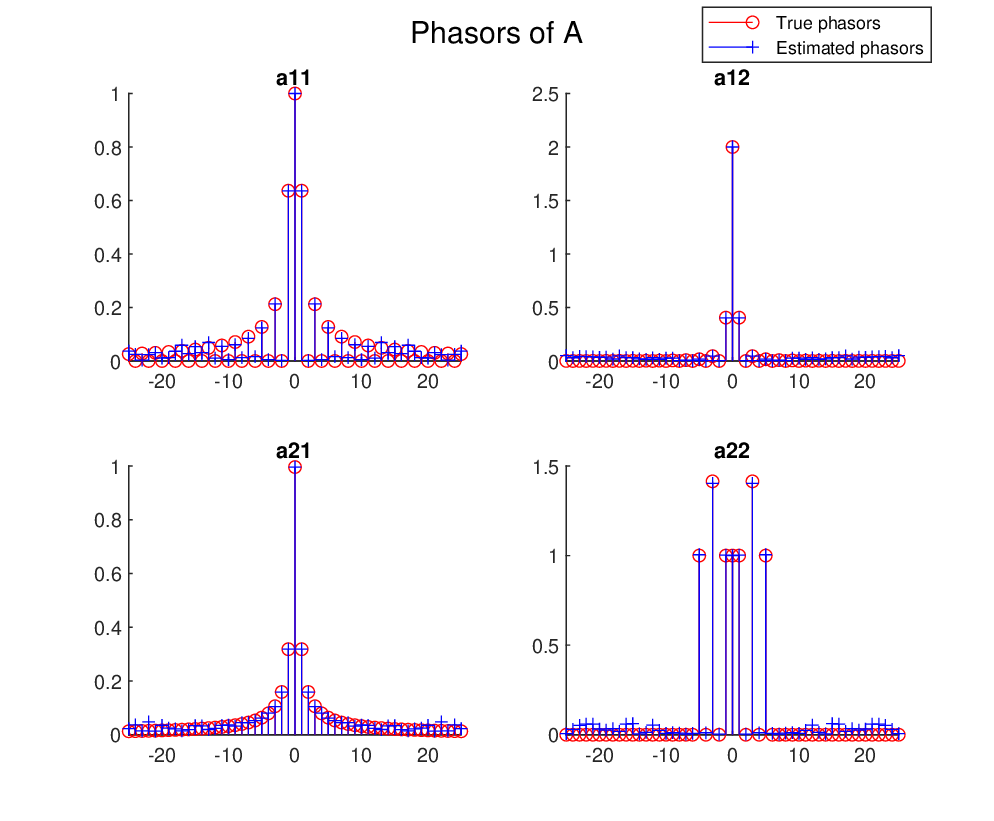}
		\caption{Moduli of the true and estimated phasors of $A$ with noise on the state measurements.} \label{ex2bis}
	\end{center}
\end{figure}

Let us plot the true and estimated moduli of the phasors for $A$ on Fig. ~\ref{ex2bis}. Since we tackled the identification problem with a finite truncation order $p=25$ and there was noise in the identification data, the effects of this noise are discernible, particularly on the higher-order phasors. This phenomenon is to be expected, as the chosen noise injects high-frequency content into the signal. However, the amplitude of these components is obviously limited. Figure ~\ref{ex2} shows that the predictive accuracy of the estimated phasors is already acceptable. To improve the results, a threshold can be set on the phasor modulus so that phasors below this threshold are eliminated.
In summary, even with a theoretically infinite number of phasors, it is possible to obtain an approximation for the non-negligible phasors.

\subsection{Wind turbine}
Consider the three-bladed wind turbine discussed in \cite{bottasso_model-independent_2015}, and review its key attributes. The equation governing the system's motion can be expressed as follows:
 \begin{equation*}
 M(t) \Ddot{q}(t) + C(t) \dot{q}(t) + K(t) q(t) = 0 
\end{equation*} where $M$, $C$ and $K$ are the system's mass, damping and stiffness matrices, and $q(t) = (\zeta_1 (t),\zeta_2 (t), \zeta_3 (t), y_{c}(t))^{T}$ contains the lag angles of each blade and the horizontal displacement of the hub. The rated rotor speed is $\Omega_r =1.2 \ rad.s^{-1}$. Functions $M$, $C$ and $K$ are periodic with period $\frac{2 \pi}{\Omega_r}$.
This autonomous system can be described by the state equation:
$
 \dot{x}(t) = A(t) x(t)
$ where 
\begin{equation*}
 x(t) = \begin{pmatrix}
 q(t) \\ \dot{q}(t)
\end{pmatrix},\ A(t) = \begin{pmatrix} 0_{4,4} & I_{4} \\ -M(t)^{-1} K(t) & -M(t)^{-1} C(t) \end{pmatrix}
\end{equation*}
We are dealing with an unstable system characterized by $n=8$. In this context, we assume a truncation order of $p=4$, resulting in the determination of $n^2 (2p+1) = 576$ unknowns. The step size is set to $\delta t = \frac{T}{256}$.

Given the inherent instability of this system, we rely on data from $15$ trajectories, each with a length of $256$ to identify its phasors. 
The identification protocol proves successful across all initial conditions, with the relative error \eqref{err} ranging from a minimum of $0.9 \%$ to a maximum of $6.3 \%$ among the $100$ trials. Furthermore, the validation of these identification results on a new trajectory is depicted in Fig.~\ref{wind_turbine}.

\begin{figure}[h!]\begin{center}
		\includegraphics[width=\linewidth]{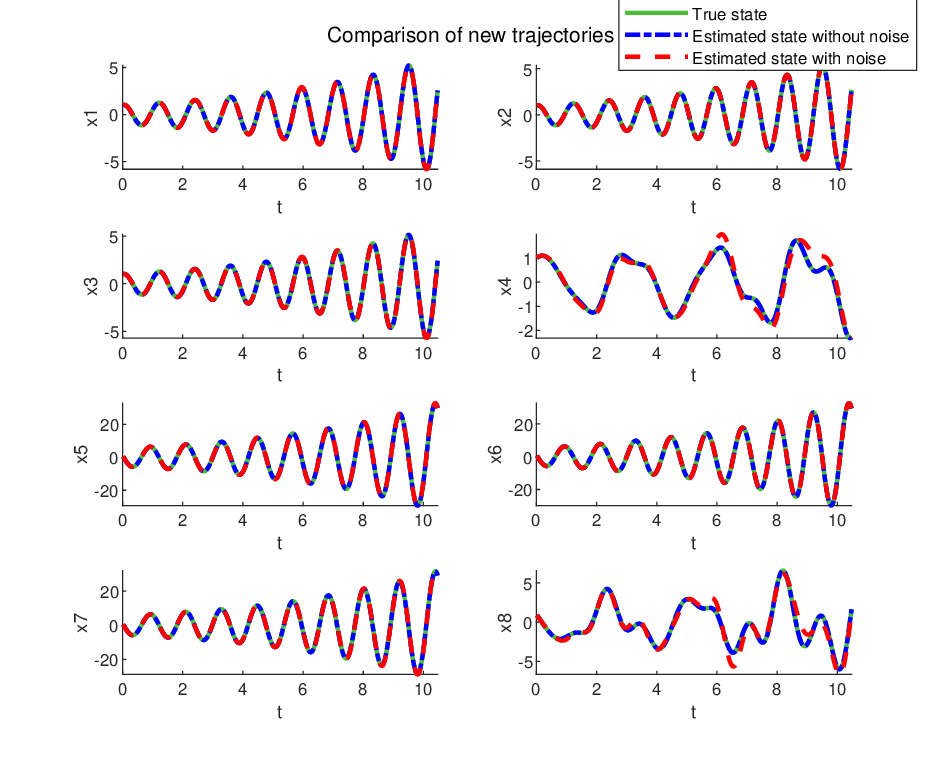}
		\caption{Comparison of the true and estimated trajectories of the wind turbine.} \label{wind_turbine}
	\end{center}
\end{figure}

The curve of the mean identification error is represented on Fig.~\ref{error_conv}, which shows that the precision of the solution is improved by a larger value of $L$.

\begin{figure}[h!]\begin{center}
		\includegraphics[width=\linewidth]{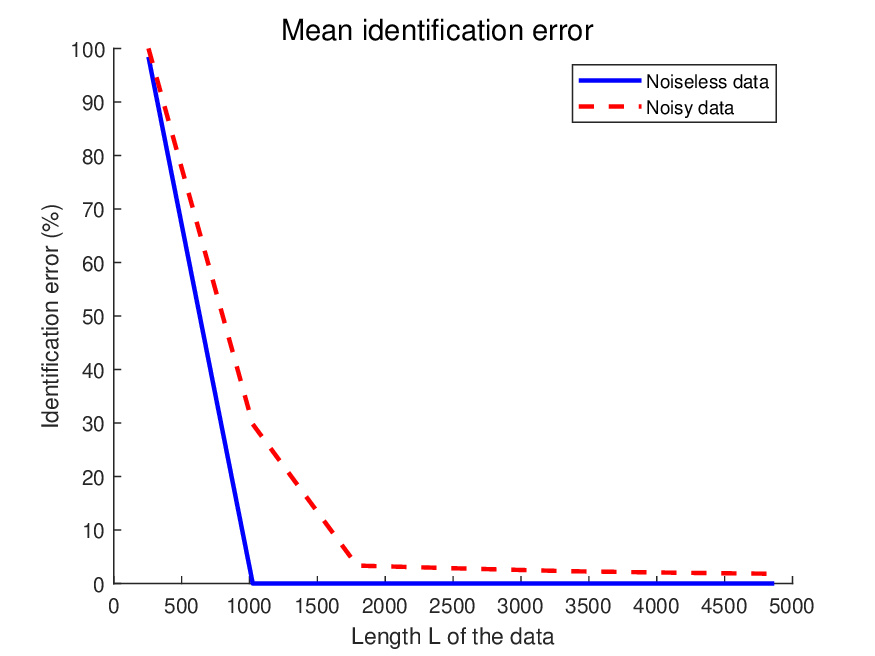}
		\caption{Identification error convergence for the wind turbine.} \label{error_conv}
	\end{center}
\end{figure}

Given that the system under investigation is a wind turbine, achieving precise identification holds significant relevance, particularly in the context of control applications.
%\addtolength{\textheight}{0cm} % This command serves to balance the column lengths
 % on the last page of the document manually. It shortens
 % the textheight of the last page by a suitable amount.
 % This command does not take effect until the next page
 % so it should come on the page before the last. Make
 % sure that you do not shorten the textheight too much.
%%%%%%%%%%%%%%%%%%%%%%%%%%%%%%%%%%%%%%%%%%%%%%%%%%%%%%%%%%%%%%%%%%%%%%%%%%%%%%%%
\section{Conclusion}
In this paper, we have presented a novel approach that enables the identification of the state and input matrices of a LTP system up to an arbitrarily small error. Our approach capitalizes on the intrinsic equivalence between LTP systems in the time domain and infinite-dimensional LTI systems in the harmonic domain. Leveraging the block Toeplitz structure of the latter, we have devised a finite-dimensional linear least-squares problem, the solution of which corresponds to the unknown phasors. 
Our approach offers significant advantages, including avoiding signal derivative calculations, and performs effectively in noisy scenarios even for unstable systems, as shown in numerical simulations.

\bibliographystyle{ieeetr}
\bibliography{references}

\end{document}